\theoremstyle{plain}
\newtheorem{thm}{\protect\theoremname}
\theoremstyle{plain}
\newtheorem{lem}{\protect\lemmaname}
\providecommand{\lemmaname}{Lemma}
\providecommand{\theoremname}{Theorem}
\begin{document}
\nolinenumbers
\title{An Explicitly Solvable Energy-Conserving Algorithm for Pitch-Angle Scattering in Magnetized Plasmas}
\author{Yichen Fu}
\affiliation{Princeton Plasma Physics Laboratory, Princeton University}
\affiliation{Department of Astrophysical Sciences, Princeton University}
\author{Xin Zhang}
\affiliation{Princeton Plasma Physics Laboratory, Princeton University}
\affiliation{Department of Astrophysical Sciences, Princeton University}
\author{Hong Qin}
\email{hongqin@princeton.edu}

\affiliation{Princeton Plasma Physics Laboratory, Princeton University}
\affiliation{Department of Astrophysical Sciences, Princeton University}
\keywords{Plasmas, Collisions, Stochastic differential equation, Structure-preserving algorithm, Monte Carlo method}
\begin{abstract}
We develop an Explicitly Solvable Energy-Conserving (ESEC) algorithm for the Stochastic Differential Equation (SDE) describing the pitch-angle scattering process in magnetized plasmas. The Cayley transform is used to calculate both the deterministic gyromotion and stochastic scattering, affording the algorithm to be explicitly solvable and exactly energy conserving. An unusual property of the SDE for pitch-angle scattering is that its coefficients diverge at the zero velocity and do not satisfy the global Lipschitz condition. Consequently, when standard numerical methods, such as the Euler-Maruyama (EM), are applied, numerical convergence is difficult to establish. For the proposed ESEC algorithm, its energy-preserving property enables us to overcome this obstacle. We rigorously prove that the ESEC algorithm is order 1/2 strongly convergent. This result is confirmed by detailed numerical studies. For the case of pitch-angle scattering in a magnetized plasma with a constant magnetic field, the numerical solution is benchmarked against the analytical solution, and excellent agreements are found.  
\end{abstract}
\maketitle

\section{Introduction\label{sec:Introduction}}

Coulomb collision between charged particles is an important physical process in plasmas. For most systems, it is dominated by the accumulative effects of small-angle scatterings. In the Fokker-Planck (FP) equation for particle distribution functions, it is described by the Landau collision operator \citep{landau1936kinetic}, \begin{linenomath} 
\begin{align}
C_{ab}[f_{a},f_{b}]=-\dfrac{q_{a}^{2}q_{b}^{2}\ln\Lambda}{8\pi\epsilon_{0}^{2}m_{a}}\dfrac{\partial}{\partial\boldsymbol{v}}\cdot\int\left(\dfrac{u^{2}\boldsymbol{I}-\boldsymbol{uu}}{u^{3}}\right)\left[\dfrac{f_{a}(\boldsymbol{v})}{m_{b}}\dfrac{\partial f_{b}(\boldsymbol{v}')}{\partial\boldsymbol{v}'}-\dfrac{f_{b}(\boldsymbol{v}')}{m_{a}}\dfrac{\partial f_{a}(\boldsymbol{v})}{\partial\boldsymbol{v}}\right]\mathrm{d}^{3}\boldsymbol{v}',\label{eq:landau}
\end{align}
\end{linenomath} where $C_{ab}$ denotes the collision experienced by particle species $a$ colliding with particle species $b$, $q_{a(b)},m_{a(b)}$ are the charge and mass of species $a(b)$, $\ln\Lambda$ is the Coulomb logarithm, $\boldsymbol{u}=\boldsymbol{v}-\boldsymbol{v}'$, and $u=|\boldsymbol{u}|$ is the Euclidean norm of $\boldsymbol{u}$. Due to the nonlinearity in distribution functions, FP equations with the Landau collision operators are difficult to solve either analytically or numerically. Fortunately, in many scenarios, simplifications are possible when the system has a well-defined separation of ordering. For the case of a low-density species propagating in a Maxwellian background species, the distribution of the background species $f_{b}(\boldsymbol{v})$ can be taken to be time-independent, and the collision operator $C_{ab}$ can be written as a linear function of the distribution of the propagating species $f_{a}$. When the mass ratio $m_{a}/m_{b}$ is small, the lead order of $C_{ab}$ reduces to the Lorentz collision operator modeling the pitch-angle scattering process, 
\begin{equation}
\mathcal{L}[f_{a}]=\dfrac{1}{2}\dfrac{\partial}{\partial\boldsymbol{v}}\cdot\left[\left(\dfrac{v^{2}\boldsymbol{I}-\boldsymbol{vv}}{v^{3}}\right)\cdot\dfrac{\partial f_{a}}{\partial\boldsymbol{v}}\right],
\end{equation}
where time has been normalized by the collision frequency $\nu_{ei}$ , and $\boldsymbol{v}$ by the thermal velocity. Pitch-angle scattering is an important physical process in plasma physics for several reasons. It is the dominant process for tokamak current drive \citep{Fisch1978,Fisch87,Karney86} and plays an important role in the runaway electron dynamics \citep{Rosenbluth97,Karney86,Boozer2015}. For first-principles-based simulation studies, a reliable algorithm for pitch-angle scattering is needed. In addition, equations of a similar structure appear in the models of stochastic Landau-Lifshitz-Gilbert dynamics for magnetization \citep{d2006midpoint}, where similar algorithmic issues and challenges exist. Th algorithm that we develop in the present study can be applied to simulate these processes in condensed matter physics as well.

There are two major types of numerical methods to simulate the Coulomb collision, the continuum methods \citep{chacon2000implicit,yoon2014fokker,kraus2017metriplectic,Hirvijoki2018} and Monte-Carlo methods \citep{takizuka1977binary,nanbu1997theory,manheimer1997langevin,cadjan1999langevin,sherlock2008monte,lemons2009small,cohen2010time,dimits2013higher,PhysRevE.102.033302}. One commonly used Monte-Carlo method is directly calculating the binary collisions between random pairs of particles \citep{takizuka1977binary,nanbu1997theory}. The particles are scattered elastically within spatially localized cells. But this method may not be computational efficient when the number of particles is large \citep{verboncoeur2005particle}. Another approach is to solve the corresponding Stochastic Differential Equations (SDE), i.e., the Langevin equations, of the FP equation \citep{manheimer1997langevin,cadjan1999langevin,sherlock2008monte,lemons2009small,cohen2010time,dimits2013higher,PhysRevE.102.033302}. In this approach, the SDE corresponding to the FP equation needs to be derived.

In the present study, we are concerned with the physics of pitch-angle scattering of electrons in a magnetized plasma, which can be equivalently described by the FP equation, 
\begin{equation}
\dfrac{\partial f_{e}(\boldsymbol{v},t)}{\partial t}=-\big[\boldsymbol{v}\times\boldsymbol{B}(t)\big]\cdot\dfrac{\partial f_{e}}{\partial\boldsymbol{v}}+\mathcal{L}[f_{e}(\boldsymbol{v},t)],\label{FP}
\end{equation}
or the corresponding Ito SDE, \begin{linenomath} 
\begin{align}
\mathrm{d}\boldsymbol{v}=\left(\boldsymbol{v}\times\boldsymbol{B}(t)-D(v)\dfrac{\boldsymbol{v}}{v^{2}}\right)\mathrm{d}t+\sqrt{D(v)}\left(\boldsymbol{I}-\dfrac{\boldsymbol{vv}}{v^{2}}\right)\cdot\mathrm{d}\boldsymbol{W}.\label{eq:Ito_SDE}
\end{align}
\end{linenomath} In Eq.\,(\ref{eq:Ito_SDE}), $\boldsymbol{W}(t)$ is the three dimensional Wiener process, $D(v)=1/v$ is the pitch-angle diffusion coefficient, and $\boldsymbol{B}$ has been normalized by $e$, $m_{e},$ and $\nu_{ei}$. We will focus on developing an effective algorithm for solving Eq.\,(\ref{eq:Ito_SDE}).

An unusual property of Eq.\,(\ref{eq:Ito_SDE}) is that its coefficients are not globally Lipschitz continuous and diverge at $\boldsymbol{v}=0$. However, for a given initial condition $\boldsymbol{v}(t_{0})=\boldsymbol{v}_{0}\neq0$, it can be easily proved that {[}see Eq.\,(\ref{eq:dv=00003D00003D0}){]} the exact solution of Eq.\,(\ref{eq:Ito_SDE}) preserves the norm of $\boldsymbol{v}$, i.e., $v=v_{0}\neq0$. Consequently, the singularity at $\boldsymbol{v}=0$ is irrelevant for exact solutions.

One can apply either the standard Euler-Maruyama (EM) algorithm or a higher-order method, such as the Milstein algorithm \citep{kloeden2013numerical}, to Eq.\,(\ref{eq:Ito_SDE}). However, in Cartesian coordinates, these SDE methods do not typically conserve particle energy due to the truncation error, accumulation of which can lead to numerical instabilities \citep{lemons1995noise} and divergent numerical solutions. Such accumulation in errors in particle energy may have disastrous consequences in physics implications.

For pitch-angle scattering, a common method to ensure energy conservation in SDE-based simulations is to cast Eq.~(\ref{eq:Ito_SDE}) in the spherical coordinates $(v,\varphi,\theta)$, where $v$ is the velocity magnitude, $\varphi$ the azimuthal angle, and $\theta$ the polar (pitch) angle \citep{lemons2009small}. In practice, the $(v,\varphi,\mu=\cos\theta)$ coordinates are often used instead \citep{rosenbluth1957fokker,boozer1981monte,dimits2013higher}. To simplify the discussion, let's consider the unmagnetized case here. In the $(v,\varphi,\mu)$ coordinates, Eq.~(\ref{eq:Ito_SDE}) is \begin{linenomath} 
\begin{align}
\mathrm{d}\mu & =-D_{a}(v)\mu\mathrm{d}t+\sqrt{D_{a}(v)(1-\mu^{2})}\mathrm{d}W_{\mu},\label{eq:Ito_SDE_spherical_coordinate}\\
\mathrm{d}\varphi & =\sqrt{\frac{D_{a}(v)}{1-\mu^{2}}}\mathrm{d}W_{\varphi},\label{eq:Ito_psi}\\
\mathrm{d}v & =0,\label{eq:Ito_sph_r}
\end{align}
\end{linenomath} where $D_{a}=1/v^{3}$ is the angular diffusion coefficient, and $W_{\mu}$ and $W_{\varphi}$ are independent Wiener processes. It is obvious that only Eqs.\,(\ref{eq:Ito_SDE_spherical_coordinate}) and (\ref{eq:Ito_psi}) need to be solved, and the energy is automatically conserved. Here, $\mu$ should be bounded in the interval of $[-1,1]$, but standard numerical schemes, such as the Euler-Maruyama or the Milstein scheme, do not preserve this bound \citep{lemons2009small}. When $|\mu|>1$, the coefficient of Eq.\,(\ref{eq:Ito_SDE_spherical_coordinate}) becomes imaginary and unphysical, and empirical modifications are required \citep{rosin2014multilevel}. Furthermore, when $\mu$ is close to $1$, which does happen regularly, the coefficient of Eq.\,(\ref{eq:Ito_psi}) diverges, and the algorithm cannot correctly calculate the dynamics of $\varphi$. In a magnetized plasma with a constant, strong magnetic field, one can ignore this failure by choosing $\varphi$ to be the gyrophase, and ignoring the dynamics of $\varphi$ based on the gyrokinetic ordering. However, this strategy is invalid if the magnetic field is weak or depends on space and/or time, or if we are interested in the physics depending the gyrophase, such as the cyclotron waves.

Beside this shortcoming, there is another mathematical difficulty associated with the SDE for the pitch-angle scattering. In Eqs.~(\ref{eq:Ito_SDE}), (\ref{eq:Ito_SDE_spherical_coordinate}) and (\ref{eq:Ito_psi}), both $D(v)$ and $1/\sqrt{1-\mu^{2}}$ are not globally Lipschitz continuous and diverge in the neighborhood of ${v}=0$ and $|\mu|=1$, respectively. However, the classical proof of convergence for the EM scheme requires that both the drift and the diffusion coefficients are globally Lipschitz continuous and have linear growth bounds \citep{kloeden2013numerical,milstein2013stochastic}. Many studies on the convergence of numerical schemes, especially for the EM method, have been carried out for SDEs with non-global Lipschitz coefficients \citep{higham2002strong,hutzenthaler2011strong}. In these studies, the coefficients of the SDEs are not Lipschitz continuous at infinity, which is different from the case of the pitch-angle scattering. We are not aware of any existing proof of convergence for these numerical schemes when applied to Eqs.\,(\ref{eq:Ito_SDE}), (\ref{eq:Ito_SDE_spherical_coordinate}), and (\ref{eq:Ito_psi}). In practical simulations, if we are only interested in the statistical properties of the ensemble of particles when using the Monte-Carlo method, this problem could be easily overlooked. Upon closer examinations of individual sample paths, however, one may find that a small fraction of them diverges due to $v$ drifting towards zero. This will be demonstrated numerically in Sec.\,\ref{sec:numerical_experiments}. Ad hoc remedies include regularizing the coefficient when $v$ is smaller than a critical $v_{c}$ \citep{rosin2014multilevel}, the value of which can only be determined empirically without a systematic approach. Other times the stochastic term is simply ignored based on other physical considerations \citep{lemons2009small} when $v$ is small.

Recently, we proposed an Explicitly Solvable Energy-Conserving scheme (ESEC) to solve the pitch-angle scattering SDE in unmagnetized plasmas using Cartesian coordinates \citep{PhysRevE.102.033302}. The ESEC scheme applies the Cayley transform \citep{qin2013boris,Shang2013private} to rotate the velocity, which guarantees the exact conservation of energy. It has been demonstrated that the ESEC method has an order $1/2$ global strong error, the same as the EM scheme when applied to SDEs satisfying the global Lipschitz condition. Since the calculation is energy-preserving and performed in Cartesian coordinates, the algorithm is not subject to the problems of imaginary and divergent coefficients associated with spherical coordinates. In the present study, we generalize the scheme proposed in Ref.\,\citep{PhysRevE.102.033302} to Eq.\,(\ref{eq:Ito_SDE}) in a magnetized plasma, and rigorously prove the strong convergence of the algorithm. We emphasize again that because the coefficients of Eq.\,(\ref{eq:Ito_SDE}) are not globally Lipschitz continuous, the standard proof of the strong convergence of the EM method is not applicable. We are not aware of any previous rigorous result on the strong convergence of numerical methods for Eq.\,(\ref{eq:Ito_SDE}). The energy-conserving nature of the ESEC method enables us to overcome the difficulty associated with the diverge at the neighborhood of $v=0$ and establish the strong convergence.

The paper is organized as follows. In Sec.~\ref{sec:ESEC_method}, we construct the ESEC method for Eq.\,(\ref{eq:Ito_SDE}) and discuss its properties. In Sec.~\ref{sec:proof_of_convergence}, its strong convergence is rigorously proved. Numerical verification of the convergence rate and energy conservation properties is demonstrated in Sec.\,\ref{sec:numerical_experiments}. Finally, the numerical solution of the pitch-angle scattering in a constant magnetic field is directly compared with the analytical solution in Sec.\,~\ref{sec:constant_B_benchmark}.

\section{\label{sec:ESEC_method} The explicitly solvable energy-conserving algorithm for pitch-angle scattering}

As discussed in Sec.\,\ref{sec:Introduction}, the physics of electrons undergoing pitch-angle scattering in magnetized plasmas can be described by the SDE (\ref{eq:Ito_SDE}). We propose the following Explicitly Solvable and Energy-Conserving (ESEC) one-step method as a numerical algorithm for Eq.\,(\ref{eq:Ito_SDE}), \begin{linenomath} 
\begin{align}
\bar{\boldsymbol{v}}_{k+1}^{\text{ES}}-\boldsymbol{v}_{k} & =\boldsymbol{v}_{k+1/2}\times\boldsymbol{B}(t_{k})h+\sqrt{D_{k}}\left(\dfrac{\boldsymbol{v}_{k}\times\Delta\boldsymbol{W}}{v_{k}^{2}}\right)\times\boldsymbol{v}_{k+1/2},\label{eq:ES_method}\\
\boldsymbol{v}_{k+1/2} & \coloneqq(\boldsymbol{v}_{k}+\bar{\boldsymbol{v}}_{k+1}^{\text{ES}})/2,
\end{align}
\end{linenomath} where $h$ is the step size in time, $t_{k+1}=t_{k}+h$, $D_{k}=D(v_{k})=1/v_{k}$, and $\Delta\boldsymbol{W}=\boldsymbol{W}(t_{k+1})-\boldsymbol{W}(t_{k})\sim\mathcal{N}(0,\boldsymbol{I}h)$ is a 3D Gaussian random variable. It generalizes the previous algorithm for pitch-angle scattering in unmagnetized plasmas \citep{PhysRevE.102.033302}. Taking the dot products of $\boldsymbol{v}_{k+1/2}$ with both sides of Eq.~(\ref{eq:ES_method}), we can see that the ESEC scheme preserves the velocity norm.

The discretization of the Lorentz force is the same as the Boris algorithm \citep{boris1970relativistic,qin2013boris}, and the discretization of the diffusion term can be regarded as a hybrid between the EM method and the midpoint method \citep{milstein2002numerical,milstein2013stochastic}. Although this scheme is implicit, the value of $\bar{\boldsymbol{v}}_{k+1}^{\text{ES}}$ can be explicitly solved. This is because the implicitness is linear, i.e., the right hand side of Eq.\,(\ref{eq:ES_method}) depends on $\bar{\boldsymbol{v}}_{k+1}^{\text{ES}}$ linearly. To solve for $\bar{\boldsymbol{v}}_{k+1}^{\text{ES}},$ we make use of the hat map \begin{linenomath} 
\begin{align}
\boldsymbol{X}=\left(\begin{array}{c}
X_{1}\\
X_{2}\\
X_{3}
\end{array}\right)\longmapsto\hat{\boldsymbol{X}}:=\left(\begin{array}{ccc}
0 & -X_{3} & X_{2}\\
X_{3} & 0 & -X_{1}\\
-X_{2} & X_{1} & 0
\end{array}\right)
\end{align}
\end{linenomath} between a vector and a $3\times3$ skew-symmetric matrix. The cross product between two vectors $\boldsymbol{X}$ and $\boldsymbol{Y}$ can be written as \begin{linenomath} 
\begin{align}
\boldsymbol{X}\times\boldsymbol{Y}=\hat{\boldsymbol{X}}\boldsymbol{Y}=\begin{pmatrix}0 & -X_{3} & X_{2}\\
X_{3} & 0 & -X_{1}\\
-X_{2} & X_{1} & 0
\end{pmatrix}\begin{pmatrix}Y_{1}\\
Y_{2}\\
Y_{3}
\end{pmatrix}.
\end{align}
\end{linenomath} It means that $\hat{\boldsymbol{X}}\boldsymbol{Y}$ is the infinitesimal 3D rotation of $\boldsymbol{Y}$ generated by $\boldsymbol{X}$. The right hand side of Eq.\,(\ref{eq:ES_method}) is thus the infinitesimal 3D rotation of $2\boldsymbol{v}_{k+1/2}$ generated by \begin{linenomath} 
\begin{align}
\boldsymbol{M}_{k}:=\sqrt{D_{k}}\left(\dfrac{\boldsymbol{v}_{k}\times\Delta\boldsymbol{W}}{2v_{k}^{2}}\right)-\dfrac{1}{2}\boldsymbol{B}(t_{k})h.
\end{align}
\end{linenomath} Then the ESEC scheme defined by Eq.~(\ref{eq:ES_method}) can be explicitly solved as \begin{linenomath} 
\begin{align}
\bar{\boldsymbol{v}}_{k+1}^{\text{ES}}=\mathcal{C}(\hat{\boldsymbol{M}}_{k})\boldsymbol{v}_{k},
\end{align}
\end{linenomath} where \begin{linenomath} 
\begin{align}
\mathcal{C}(\hat{\boldsymbol{M}}_{k}):=(\boldsymbol{I}-\hat{\boldsymbol{M}}_{k})^{-1}(\boldsymbol{I}+\hat{\boldsymbol{M}}_{k}),
\end{align}
\end{linenomath} is the Cayley transform of matrix $\hat{\boldsymbol{M}}_{k}$.

It is worth mentioning that the deterministic drift term in the Ito SDE (\ref{eq:Ito_SDE}) does not show up explicitly in the ESEC scheme in Eq.~(\ref{eq:ES_method}). Instead, the drift term is hidden in the implicit part of $\boldsymbol{v}_{k+1/2}$. Since the ESEC scheme is explicitly solvable, if one explicitly calculates the expression of $\boldsymbol{v}_{k+1}$ in terms of $\boldsymbol{v}_{k}$, the deterministic term is recovered. See Appendix B in Ref.\,\citep{PhysRevE.102.033302} for more details of this calculation. More rigorously, in the next section we will calculate the one-step weak error of the ESEC scheme. It shows that the deterministic drift term is recovered from the implicit part with an error in the order of $\mathcal{O}(h^{3/2})$, which guarantees the strong convergence of the ESEC scheme.

As pointed out in Ref.\,\citep{milstein2002numerical}, implicit methods, even when explicitly solvable, for stochastic systems with multiplicative noise may not be acceptable a priori due to its infinite expectation in the one-step approximation. In those situations, a truncation of $\Delta W$ is required \citep{milstein2002numerical}. Such a modification of $\Delta W$ is not necessary for the ESEC scheme. This is because the ESEC scheme preserves the velocity norm, and the expectation of the one-step approximation always exists, i.e., $\mathbb{E}|\bar{\boldsymbol{v}}_{k+1}^{\text{ES}}|=\mathbb{E}|\boldsymbol{v}_{0}|<\infty$.

\section{\label{sec:proof_of_convergence} Proof of strong convergence of the ESEC method}

In this section, we rigorously prove the strong convergence of the ESEC method for Eq.~(\ref{eq:Ito_SDE}) defined on the time interval $t\in[t_{0},T]$ with the initial condition $\boldsymbol{v}=\boldsymbol{v}_{0}$ at $t=t_{0}$. Denote by $\boldsymbol{v}(t;\tilde{t},\tilde{\boldsymbol{v}})$ the analytical solution of the SDE with initial condition $\boldsymbol{v}=\tilde{\boldsymbol{v}}$ at $t=\tilde{t}\in[t_{0},T]$, and by $\bar{\boldsymbol{v}}(\tilde{t}+h;\tilde{t},\tilde{\boldsymbol{v}})$ the one-step approximation from a given value $\boldsymbol{v}=\tilde{\boldsymbol{v}}$ at $t=\tilde{t}\in[t_{0},T-h]$ according to a numerical method, for example, Eq.~(\ref{eq:ES_method}). Using the one-step approximation, we recurrently construct the numerical solution $\bar{\boldsymbol{v}}(t_{k};t_{0},\boldsymbol{v}_{0},h)$ at $t_{k}=t_{0}+kh$ for $k=0,\cdots,N$ with $t_{N}=T$ and $h=(T-t_{0})/N$.

We will prove the following convergence theorem of the ESEC method. 
\begin{thm}
For the SDE (\ref{eq:Ito_SDE}), the numerical solution $\bar{\boldsymbol{v}}(t_{k};t_{0},\boldsymbol{v}_{0},h)$ generated by the ESEC algorithm (\ref{eq:ES_method}) has order $1/2$ strong error in the time interval of $[t_{0},T]$. More precisely, for any $\boldsymbol{v}_{0}\in\mathbb{R}^{3}/\{0\}$, $N\in\mathbb{N}$, and $k=0,1,\cdots,N$, the following inequality holds, \begin{linenomath} 
\begin{align}
\left[\mathbb{E}|\boldsymbol{v}(t_{k};t_{0},\boldsymbol{v}_{0})-\bar{\boldsymbol{v}}(t_{k};t_{0},\boldsymbol{v}_{0},h)|^{2}\right]^{1/2} & \leq K(1+|\boldsymbol{v}_{0}|^{2})^{1/2}h^{1/2},\label{eq:strong_error_whole}
\end{align}
\end{linenomath} where the constant $K$ is independent of $h$. \label{thm} 
\end{thm}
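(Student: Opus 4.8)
The plan is to follow the classical Milstein-type framework for strong convergence of one-step methods (as in Milstein--Tretyakov), which reduces the global strong error to estimates on the \emph{one-step} error, but with a crucial modification: because the coefficients of Eq.~(\ref{eq:Ito_SDE}) fail the global Lipschitz condition near $v=0$, I cannot invoke the standard fundamental theorem directly. Instead I will exploit the norm-preserving property of \emph{both} the exact solution and the ESEC scheme. Since $v(t)\equiv v_0$ along exact trajectories and $|\bar{\boldsymbol v}_k|\equiv v_0$ along numerical trajectories, every evaluation of $D(v)=1/v$, $\sqrt{D(v)}$, and $\boldsymbol{vv}/v^2$ occurs on the sphere $|\boldsymbol v|=v_0$, which is a compact set bounded away from the singularity. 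Restricted to a neighborhood of this sphere, the coefficients \emph{are} smooth with bounded derivatives, so effective Lipschitz and linear-growth constants exist (depending on $v_0$, which is why $K$ may depend on $|\boldsymbol v_0|$ as in Eq.~(\ref{eq:strong_error_whole})). I would make this rigorous either by a stopping-time argument or, more cleanly, by composing the drift and diffusion with a smooth cutoff that equals the identity on an annulus around $|\boldsymbol v|=v_0$ and is globally Lipschitz --- the modified SDE has the same exact solution and the same ESEC iterates, and the classical theorem applies to it.

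The core of the argument is then the one-step error analysis. I would expand $\bar{\boldsymbol v}_{k+1}^{\text{ES}}=\mathcal{C}(\hat{\boldsymbol M}_k)\boldsymbol v_k$ using $\mathcal{C}(\hat{\boldsymbol M}_k)=(\boldsymbol I-\hat{\boldsymbol M}_k)^{-1}(\boldsymbol I+\hat{\boldsymbol M}_k)=\boldsymbol I+2\hat{\boldsymbol M}_k+2\hat{\boldsymbol M}_k^2+2\hat{\boldsymbol M}_k^3+\cdots$, where $\boldsymbol M_k=\tfrac{1}{2}\sqrt{D_k}(\boldsymbol v_k\times\Delta\boldsymbol W)/v_k^2-\tfrac12\boldsymbol B(t_k)h$ has stochastic part $O(h^{1/2})$ and deterministic part $O(h)$. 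Collecting terms, the $2\hat{\boldsymbol M}_k\boldsymbol v_k$ term reproduces the EM increments for the Lorentz force and the diffusion term exactly; the key observation is that the \emph{quadratic} term $2\hat{\boldsymbol M}_k^2\boldsymbol v_k$, upon taking the conditional expectation of the stochastic piece (using $\mathbb E[\Delta W_i\Delta W_j]=\delta_{ij}h$), produces precisely $-D_k(\boldsymbol v_k/v_k^2)h$ --- the Ito drift correction --- up to $O(h^{3/2})$. This is the ``hidden drift'' mechanism already flagged in the text after Eq.~(\ref{eq:ES_method}). One must also check the Stratonovich-to-Ito and midpoint corrections cancel correctly, and bound all remaining terms. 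The upshot is the two standard one-step estimates: $|\mathbb E[\boldsymbol v(\tilde t+h;\tilde t,\tilde{\boldsymbol v})-\bar{\boldsymbol v}(\tilde t+h;\tilde t,\tilde{\boldsymbol v})]|\le K(1+|\tilde{\boldsymbol v}|^2)^{1/2}h^2$ (local weak order $2$, hence exponent $p_1=2$) and $[\mathbb E|\boldsymbol v-\bar{\boldsymbol v}|^2]^{1/2}\le K(1+|\tilde{\boldsymbol v}|^2)^{1/2}h^{3/2}$ (local strong order $3/2$, exponent $p_2=3/2$). These satisfy $p_2\ge \tfrac12$ and $p_1\ge p_2+\tfrac12$, so the fundamental convergence theorem yields global strong order $p_2-\tfrac12=\tfrac12$, which is exactly Eq.~(\ref{eq:strong_error_whole}).

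The remaining ingredients are moment bounds: I need $\sup_k\mathbb E|\bar{\boldsymbol v}_k|^{2}<\infty$ uniformly in $h$, which here is \emph{trivial} because $|\bar{\boldsymbol v}_k|=|\boldsymbol v_0|$ almost surely --- this is precisely where energy conservation rescues the proof, since without it the non-Lipschitz $1/v$ factor could let moments blow up, as the paper notes happens for EM on stray sample paths. I also need the auxiliary exact solution $\boldsymbol v(s;\tilde t,\tilde{\boldsymbol v})$ started from an arbitrary deterministic $\tilde{\boldsymbol v}$ near the sphere to have bounded moments on $[\tilde t,\tilde t+h]$, which follows from the same norm conservation plus boundedness of the cutoff coefficients. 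Then the standard telescoping/induction over the $N$ steps, using the effective Lipschitz constant of the cut-off coefficients to propagate errors with a Gr\"onwall-type factor $e^{K(T-t_0)}$, closes the estimate with a constant $K$ independent of $h$.

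I expect the main obstacle to be the one-step Ito-drift recovery: carefully Taylor-expanding the Cayley transform, grouping the stochastic and deterministic contributions at orders $h^{1/2}$, $h$, $h^{3/2}$, taking conditional expectations, and verifying that the $\hat{\boldsymbol M}_k^2$ term's mean reproduces $-D(v)\boldsymbol v/v^2\,h$ exactly (so the weak error is genuinely $O(h^2)$ rather than $O(h)$). The matrix algebra of $\hat{\boldsymbol M}_k^2$ and $\hat{\boldsymbol M}_k^3$ acting on $\boldsymbol v_k$, together with the identity $\hat{\boldsymbol a}\hat{\boldsymbol b}\boldsymbol c=(\boldsymbol a\cdot\boldsymbol c)\boldsymbol b-(\boldsymbol a\cdot\boldsymbol b)\boldsymbol c$ and $\hat{\boldsymbol v}_k\boldsymbol v_k=0$, is where the bookkeeping is heaviest; everything else is a routine application of the established machinery once the coefficients have been localized to the energy sphere.
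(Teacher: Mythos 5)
Your overall strategy is the same as the paper's: use the fact that both the exact flow of Eq.~(\ref{eq:Ito_SDE}) and the ESEC iterates of Eq.~(\ref{eq:ES_method}) live on the sphere $v=v_{0}$, replace the singular SDE by a globally Lipschitz one that agrees with it there (the paper does this with the specific substitution $\sqrt{D(v)}\to\alpha v$, $\alpha=\sqrt{D(v_{0})}/v_{0}$, giving Eqs.~(\ref{eq:Ito_SDE2}) and (\ref{eq:ES_method2}); your smooth cutoff on an annulus would serve the same purpose), note that the modified problem has the same exact solution and the same numerical iterates, and then invoke the Milstein--Tretyakov fundamental theorem through one-step estimates. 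Up to that point the proposal is sound.

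The genuine flaw is in the one-step exponents you assert. You claim local mean-square error $O(h^{3/2})$ and local weak error $O(h^{2})$. The first is not attainable by this scheme: the ESEC update uses only the Wiener increments $\Delta\boldsymbol{W}$, so it cannot reproduce the double Ito integral remainder $\rho_{i,4}=\int\!\!\int\Lambda_{m}\sigma_{ir}\,\mathrm{d}W_{m}\mathrm{d}W_{r}$ (the L\'evy-area/Milstein contribution), whose root-mean-square size is $O(h)$, cf.\ Eq.~(\ref{eq:integral_estimation}) with $p_{4}=2$; likewise the zero-mean fluctuating part of the quadratic Cayley term in Eq.~(\ref{eq:one_step_error_1}) is RMS $O(h)$. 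Hence the one-step mean-square error is of order $1$, and the weak error the paper establishes is $3/2$, not $2$ (the cubic-in-$M_{j}$ term is only bounded, not cancelled). Your own application of the fundamental theorem is also inconsistent: with your claimed $p_{2}=3/2$ the conclusion would be global strong order $p_{2}-1/2=1$, not $1/2$, which contradicts both the statement being proved and the numerically observed order $1/2$ in Fig.~\ref{fig:strong_error}. The repair is exactly Lemma~\ref{lemma} of the paper: prove one-step mean-square error of order $1$ and one-step weak error of order $3/2$ (the conditional-mean cancellation of the hidden Ito drift by the $\hat{\boldsymbol{M}}_{k}^{2}$ term, which you correctly identified, is what upgrades the weak order from $1$ to $3/2$); these satisfy the hypotheses of Theorem~1.1 of Milstein--Tretyakov and yield precisely the global order $1/2$ of Eq.~(\ref{eq:strong_error_whole}).
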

As discussed above, the coefficients in Eq.~(\ref{eq:Ito_SDE}) is not globally Lipschitz continuous and diverge at $\boldsymbol{v}=0$, which makes Theorem \ref{thm} difficult to prove directly. To overcome this obstacle, we consider the following modified SDE for pitch-angle scattering, \begin{linenomath} 
\begin{align}
\mathrm{d}\boldsymbol{v}(t)=\left(\boldsymbol{v}\times\boldsymbol{B}-\alpha^{2}\boldsymbol{v}\right)\mathrm{d}t+\alpha\left(\boldsymbol{I}v-\frac{\boldsymbol{v}\boldsymbol{v}}{v}\right)\cdot\mathrm{d}\boldsymbol{W},\label{eq:Ito_SDE2}
\end{align}
\end{linenomath} where $\alpha$ is a non-zero constant. It can be verified (shown in Appendix \ref{sec:verify_lipschitz}) that the coefficients in Eq.~(\ref{eq:Ito_SDE2}) are all globally Lipschitz continuous and finite at $v=0$. They are also bounded by linear growth as $\boldsymbol{v}\rightarrow\infty$. We further define a modified ESEC method for Eq.~(\ref{eq:Ito_SDE2}) as \begin{linenomath} 
\begin{align}
\bar{\boldsymbol{v}}_{k+1}^{\text{ES}}-\boldsymbol{v}_{k}=\boldsymbol{v}_{k+1/2}\times\boldsymbol{B}(t_{k})h+\alpha\left(\dfrac{\boldsymbol{v}_{k}\times\Delta\boldsymbol{W}}{v_{k}}\right)\times\boldsymbol{v}_{k+1/2}.\label{eq:ES_method2}
\end{align}
\end{linenomath} The modified ESEC method obviously also preserves the velocity norm exactly.

Instead of proving Theorem \ref{thm} directly, we first prove in Lemma \ref{lemma} the convergence of the modified ESEC method (\ref{eq:ES_method2}) for the modified SDE (\ref{eq:Ito_SDE2}), and then show that the lemma implies Theorem \ref{thm} for any non-zero initial condition $\boldsymbol{v}_{0}$. 
\begin{lem}
\label{lemma} For the analytical solution $\boldsymbol{v}(t;t_{0},\boldsymbol{v}_{0})$ of Eq.~(\ref{eq:Ito_SDE2}) and the numerical solution $\bar{\boldsymbol{v}}(t;t_{0},\boldsymbol{v}_{0},h)$ generated by Eq.~(\ref{eq:ES_method2}) in the time interval of $[t_{0},T]$, the following results hold:

(i) The one-step approximation $\bar{\boldsymbol{v}}(t+h;t,\tilde{\boldsymbol{v}},h)$ has order $1$ strong error, i.e., for arbitrary $t\in[t_{0},T-h]$ and $\tilde{\boldsymbol{v}}\in\mathbb{R}^{3}$, \begin{linenomath} 
\begin{align}
\left[\mathbb{E}|\boldsymbol{v}(t+h;t,\tilde{\boldsymbol{v}})-\bar{\boldsymbol{v}}(t+h;t,\tilde{\boldsymbol{v}},h)|^{2}\right]^{1/2} & \leq K(1+|\tilde{\boldsymbol{v}}|^{2})^{1/2}h.\label{eq:strong_error}
\end{align}
\end{linenomath}

(ii) The one-step approximation $\bar{\boldsymbol{v}}(t+h;t,\tilde{\boldsymbol{v}},h)$ has order $3/2$ weak error, i.e., for arbitrary $t\in[t_{0},T-h]$ and $\tilde{\boldsymbol{v}}\in\mathbb{R}^{3}$, \begin{linenomath} 
\begin{align}
\left|\mathbb{E}\left[\boldsymbol{v}(t+h;t,\tilde{\boldsymbol{v}})-\bar{\boldsymbol{v}}(t+h;t,\tilde{\boldsymbol{v}},h)\right]\right| & \leq K(1+|\tilde{\boldsymbol{v}}|^{2})^{1/2}h^{3/2}.\label{eq:weak_error}
\end{align}
\end{linenomath}

(iii) The approximation $\bar{\boldsymbol{v}}(t_{k};t_{0},\boldsymbol{v}_{0})$ has order $1/2$ strong error in the entire time interval, i.e., for any $\boldsymbol{v}_{0}\in\mathbb{R}^{3}$ and $t_{k}=t_{0}+hk$ $(k=1,...,N)$, \begin{linenomath} 
\begin{align}
\left[\mathbb{E}|\boldsymbol{v}(t_{k};t_{0},\boldsymbol{v}_{0})-\bar{\boldsymbol{v}}(t_{k};t_{0},\boldsymbol{v}_{0},h)|^{2}\right]^{1/2} & \leq K(1+|\boldsymbol{v}_{0}|^{2})^{1/2}h^{1/2}.\label{strongErrorWhole}
\end{align}
\end{linenomath} Here, the constant $K$ is independent of $h$ and $\boldsymbol{v}_{0}$. 
\end{lem}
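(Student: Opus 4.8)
The plan is to prove the one-step estimates (i) and (ii) by directly comparing the Taylor expansion of the one-step map of the modified ESEC scheme~(\ref{eq:ES_method2}) with the Ito--Taylor expansion of the exact one-step evolution of~(\ref{eq:Ito_SDE2}), and then to obtain (iii) at once from the fundamental convergence theorem of Milstein~\citep{milstein2013stochastic}: once the local weak order $p_1$ and the local strong order $p_2$ satisfy $p_2\ge 1/2$ and $p_1\ge p_2+1/2$, and the numerical solution has uniformly bounded moments, the global strong order is $p_2-1/2$. With $p_2=1$ from (i) and $p_1=3/2$ from (ii) this yields exactly the exponent $1/2$ in~(\ref{strongErrorWhole}). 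The moment hypothesis is immediate: an Ito computation gives $\mathrm{d}(|\boldsymbol v|^2)=0$ along~(\ref{eq:Ito_SDE2}) (the drift contribution $-2\alpha^2|\boldsymbol v|^2$ is cancelled exactly by $\mathrm{tr}(gg^{\top})=2\alpha^2|\boldsymbol v|^2$, and $\boldsymbol v^{\top}g=0$), so $|\boldsymbol v(t)|\equiv|\boldsymbol v_0|$, while dotting~(\ref{eq:ES_method2}) with $\boldsymbol v_{k+1/2}$ gives $|\bar{\boldsymbol v}_{k+1}|=|\boldsymbol v_k|$, hence $|\bar{\boldsymbol v}_k|\equiv|\boldsymbol v_0|$; thus all moments of both processes are finite. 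The global Lipschitz and linear-growth bounds on the coefficients of~(\ref{eq:Ito_SDE2}) established in Appendix~\ref{sec:verify_lipschitz} supply the remaining hypotheses of Milstein's theorem, and the cases $\boldsymbol v_0=0$, $\tilde{\boldsymbol v}=0$ are trivial, so throughout we take $\boldsymbol v_0\neq 0$, $\tilde{\boldsymbol v}\neq0$.

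For (i) and (ii) I would first put~(\ref{eq:ES_method2}) into closed form. Writing its right-hand side as $\boldsymbol a_k\times\boldsymbol v_{k+1/2}$ with $\boldsymbol a_k:=\alpha(\boldsymbol v_k\times\Delta\boldsymbol W)/v_k-\boldsymbol B(t_k)h$ gives $\bar{\boldsymbol v}_{k+1}=\mathcal C(\tfrac12\hat{\boldsymbol a}_k)\boldsymbol v_k$, and the Neumann expansion $\mathcal C(\hat{\boldsymbol M})=\boldsymbol I+2\hat{\boldsymbol M}+2\hat{\boldsymbol M}^2+\cdots$ yields $\bar{\boldsymbol v}_{k+1}=\boldsymbol v_k+\boldsymbol a_k\times\boldsymbol v_k+\tfrac12\boldsymbol a_k\times(\boldsymbol a_k\times\boldsymbol v_k)+\cdots$. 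Denote the drift and diffusion of~(\ref{eq:Ito_SDE2}) by $\boldsymbol f(\boldsymbol v)=\boldsymbol v\times\boldsymbol B-\alpha^2\boldsymbol v$ and $g(\boldsymbol v)=\alpha(v\boldsymbol I-\boldsymbol v\boldsymbol v/v)$. The only $\mathcal O(h^{1/2})$ contribution in the Neumann expansion is $\alpha\big((\boldsymbol v_k\times\Delta\boldsymbol W)/v_k\big)\times\boldsymbol v_k$, which by the vector triple-product identity and $\boldsymbol v_k\cdot(\boldsymbol v_k\times\Delta\boldsymbol W)=0$ equals $\alpha(v_k\boldsymbol I-\boldsymbol v_k\boldsymbol v_k/v_k)\Delta\boldsymbol W=g(\boldsymbol v_k)\Delta\boldsymbol W$ exactly, while $\boldsymbol v_k\times\boldsymbol B(t_k)h$ reproduces the $\boldsymbol v\times\boldsymbol B$ part of the drift. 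The remaining drift piece $-\alpha^2\boldsymbol v_k h$, which never appears explicitly in~(\ref{eq:ES_method2}), must emerge from the second Cayley term: using $\boldsymbol v_k\cdot(\boldsymbol v_k\times\Delta\boldsymbol W)=0$ once more gives $\tfrac12\boldsymbol a_k\times(\boldsymbol a_k\times\boldsymbol v_k)=-\tfrac{\alpha^2}{2v_k^2}\boldsymbol v_k\,|\boldsymbol v_k\times\Delta\boldsymbol W|^2+\mathcal O(h^{3/2})$, whose expectation is $-\alpha^2\boldsymbol v_k h+\mathcal O(h^2)$ (since $\mathbb E|\boldsymbol v_k\times\Delta\boldsymbol W|^2=2v_k^2h$) and whose zero-mean fluctuation is $\mathcal O(h)$ in $L^2$. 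Comparing with the exact expansion $\boldsymbol v(t+h)=\tilde{\boldsymbol v}+\boldsymbol f(\tilde{\boldsymbol v})h+g(\tilde{\boldsymbol v})\Delta\boldsymbol W+(\text{Milstein double integrals, }\mathcal O(h)\text{ in }L^2)+\mathcal O(h^{3/2})$: the $\mathcal O(h^{1/2})$ terms cancel identically and the residual is $\mathcal O(h)$ in $L^2$, which is (i); taking expectations, the explicit drift $\boldsymbol v_k\times\boldsymbol B h$ together with the $-\alpha^2\boldsymbol v_k h$ recovered in mean from the second Cayley term reproduces $\boldsymbol f(\tilde{\boldsymbol v})h$ with an error $\mathcal O(h^{3/2})$, which is (ii).

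Two points require care, the second being the \emph{crux}. First, $g$ in~(\ref{eq:Ito_SDE2}) is Lipschitz but only $C^{\infty}$ away from $\boldsymbol v=0$, so the Ito--Taylor expansion with its derivative bounds is a priori delicate; norm conservation rescues it, because the exact one-step flow from $\tilde{\boldsymbol v}\neq0$ stays on the sphere $\{|\boldsymbol v|=|\tilde{\boldsymbol v}|\}$, which is bounded away from the singularity. Moreover both~(\ref{eq:Ito_SDE2}) and~(\ref{eq:ES_method2}) are \emph{covariant} under $\boldsymbol v\mapsto\lambda\boldsymbol v$ along a fixed Brownian path — the drift and diffusion of the SDE are homogeneous of degree one, and $(\boldsymbol v_k\times\Delta\boldsymbol W)/v_k$ is homogeneous of degree zero — so the one-step error is itself homogeneous of degree one in $\tilde{\boldsymbol v}$; it therefore suffices to establish (i) and (ii) for $|\tilde{\boldsymbol v}|=1$ with an $\mathcal O(1)$ constant and rescale, which reproduces precisely the factor $(1+|\tilde{\boldsymbol v}|^2)^{1/2}$ (and analogously the factor $(1+|\boldsymbol v_0|^2)^{1/2}$ in (iii)). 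Second — the main obstacle — the Neumann series for $\mathcal C(\tfrac12\hat{\boldsymbol a}_k)$ converges term by term only where $|\boldsymbol a_k|<2$, while $\Delta\boldsymbol W$ is unbounded, so the higher ``$\cdots$'' terms cannot be controlled by naive truncation. I would split on $\{|\boldsymbol a_k|\le 1\}$: there the expansion tail beyond second order is bounded by $C|\boldsymbol a_k|^3|\boldsymbol v_k|$, of $L^2$ norm (hence mean) $\mathcal O(h^{3/2})$ by the Gaussian moments of $\Delta\boldsymbol W$; on the complement, whose probability is at most a Gaussian tail $2e^{-c/h}$ uniformly in the direction of $\tilde{\boldsymbol v}$ (since $|\boldsymbol v_k\times\Delta\boldsymbol W|/v_k\le|\Delta\boldsymbol W|$), one uses the bound $|\mathcal C(\tfrac12\hat{\boldsymbol a}_k)\boldsymbol v_k|=|\boldsymbol v_k|$ and controls the truncated polynomial crudely by powers of $|\Delta\boldsymbol W|$, so that Cauchy--Schwarz against the exponentially small probability renders the contribution $o(h^n)$ for every $n$. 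Assembling the exactly matched leading terms, the $\mathcal O(h)$-in-$L^2$ quadratic fluctuation of the second Cayley term, the Milstein remainder of the exact solution, and these tail bounds gives (i) and (ii) with constants of the stated form, after which Milstein's theorem delivers (iii).
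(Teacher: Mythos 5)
Your proposal is correct and shares the paper's overall architecture --- one-step strong order $1$ and weak order $3/2$ for the modified scheme, followed by Milstein's fundamental theorem, using the Appendix-A global Lipschitz/linear-growth bounds and exact norm conservation of both the exact and numerical flows --- and it hits the same key cancellation: the expectation of the second-order rotation term, $-\tfrac{\alpha^{2}}{2v_{k}^{2}}\boldsymbol{v}_{k}|\boldsymbol{v}_{k}\times\Delta\boldsymbol{W}|^{2}$, reproduces the Ito drift $-\alpha^{2}\boldsymbol{v}_{k}h$ up to $\mathcal{O}(h^{2})$. Where you genuinely differ is in how the implicit/Cayley structure is expanded. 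The paper never expands $\mathcal{C}$ in a series: it substitutes the implicit relation $\bar{v}_{i}(t+h)-v_{i}(t)=\epsilon_{ijk}M_{j}\sqrt{h}\,[v_{k}(t)+\bar{v}_{k}(t+h)]$ into itself once (for the strong estimate) and twice (for the weak estimate), so every residual term carries the factor $v_{q}(t)+\bar{v}_{q}(t+h)$, which exact norm conservation bounds by $2|v_{q}(t)|$; this is an algebraic identity valid for every realization of $\Delta\boldsymbol{W}$, so no convergence question arises and no good/bad-set split or Gaussian-tail estimate is needed. Your Neumann expansion of $\mathcal{C}(\tfrac12\hat{\boldsymbol{a}}_{k})$ forces you to treat the event $|\boldsymbol{a}_{k}|\geq1$ separately; your handling of it is sound (and on that event you can simply bound the entire one-step error by $2|\tilde{\boldsymbol{v}}|$ using norm conservation of both flows, instead of tracking the truncated polynomial), but it is extra machinery that the paper's substitution trick renders unnecessary. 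Two of your additions are genuine refinements relative to the paper: the homogeneity/scaling reduction to $|\tilde{\boldsymbol{v}}|=1$, which cleanly yields the $(1+|\tilde{\boldsymbol{v}}|^{2})^{1/2}$ factor, and the explicit observation that norm conservation keeps the exact one-step path on the sphere $|\boldsymbol{v}|=|\tilde{\boldsymbol{v}}|$, away from the origin where the diffusion coefficient is not smooth, thereby justifying the Ito--Taylor remainder estimates that the paper only asserts by citing Milstein's Lemma 2.2. The passage to (iii) via the fundamental convergence theorem is identical in the two arguments.
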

\begin{proof}
The proof starts from the Ito-Taylor expansion (a.k.a. the Wagner-Platen expansion) of Eq.~(\ref{eq:Ito_SDE2}), based on which (i) and (ii) will be proved. We than prove (iii) by applying Theorem 1.1 in Ref.~\citep{milstein2013stochastic}, which states that for a method with order $p_{1}$ strong error and order $p_{2}$ weak error in one step, if $p_{1}\geq1/2$ and $p_{2}\geq p_{1}+1/2$, then the numerical scheme has order $(p_{1}-1/2)$ strong error on the entire time interval.

We rewrite Eq.~(\ref{eq:Ito_SDE2}) using Cartesian indices, \begin{linenomath} 
\begin{align}
\mathrm{d}v_{i} & =\mu_{i}\mathrm{d}t+\sigma_{ir}\mathrm{d}W_{r},\qquad\mu_{i}=\epsilon_{ijk}v_{j}B_{k}-\alpha^{2}v_{i},\quad\sigma_{ir}=\alpha\left(\delta_{ir}v-v_{i}v_{r}/v\right)=\alpha\epsilon_{ijk}\epsilon_{jlr}v_{l}v_{k}/v,\label{eq:Ito_SDE_compact}
\end{align}
\end{linenomath} where $\delta_{ij}$ is the Kronecker delta, $\epsilon_{ijk}$ is the Levi-Civita symbol, and repeated indices are summed over.

When $\boldsymbol{v}(t)$ is a solution of Eq.~(\ref{eq:Ito_SDE_compact}), by Ito formula, any sufficiently smooth function $f(t,\boldsymbol{v})$ can be written as \begin{linenomath} 
\begin{align}
f(t,\boldsymbol{v}(t)) & =f(t_{0},\boldsymbol{v}_{0})+\int_{t_{0}}^{t}\Lambda_{r}f(t_{1},\boldsymbol{v}(t_{1}))\mathrm{d}W_{r}(t_{1})+\int_{t_{0}}^{t}Lf(t_{1},\boldsymbol{v}(t_{1}))\mathrm{d}t_{1},\label{eq:Ito_Taylor}\\
\Lambda_{r}: & =\sigma_{ir}\dfrac{\partial}{\partial x_{i}},\quad L:=\dfrac{\partial}{\partial t}+\mu_{i}\dfrac{\partial}{\partial x_{i}}+\dfrac{1}{2}\sigma_{ir}\sigma_{jr}\dfrac{\partial^{2}}{\partial x_{i}\partial x_{j}}.\label{eq:D}
\end{align}
\end{linenomath} Consecutively applying Eq.~(\ref{eq:Ito_Taylor}) to the coefficients inside the integral leads to the Ito-Taylor expansion with integral type remainders.

The Ito-Taylor expansion for $v_{i}(t)$ itself is \begin{linenomath} 
\begin{align}
\begin{split}v_{i}(t+h)=v_{i}(t)+\alpha & \epsilon_{ijk}v_{j}(t)B_{k}(t)h-\alpha^{2}v_{i}(t)h+\alpha\epsilon_{ijk}\epsilon_{jlr}v_{l}(t)v_{k}(t)\Delta W_{r}/v(t)+\sum_{\lambda=1}^{4}\rho_{i,\lambda},\end{split}
\label{eq:exact}
\end{align}
\end{linenomath} where $\Delta W_{r}=\int_{t}^{t+h}\mathrm{d}W_{r}\sim\mathcal{N}(0,h)$. The first three terms of the expansion is the Euler-Maruyama approximation of Eq.~(\ref{eq:Ito_SDE2}); the forth term is the four integral remainders, \begin{linenomath} 
\begin{align}
\begin{split}\rho_{i,1} & =\int_{t}^{t+h}\int_{t}^{t_{1}}L\mu_{i}(t_{2})\mathrm{d}t_{2}\mathrm{d}t_{1},\quad\rho_{i,2}=\int_{t}^{t+h}\int_{t}^{t_{1}}\Lambda_{m}\mu_{i}(t_{2})\mathrm{d}W_{m}(t_{2})\mathrm{d}t_{1},\\
\rho_{i,3} & =\int_{t}^{t+h}\int_{t}^{t_{1}}L\sigma_{ir}(t_{2})\mathrm{d}t_{2}\mathrm{d}W_{r}(t_{1}),\quad\rho_{i,4}=\int_{t}^{t+h}\int_{t}^{t_{1}}\Lambda_{m}\sigma_{ir}(t_{2})\mathrm{d}W_{m}(t_{2})\mathrm{d}W_{r}(t_{1}),
\end{split}
\end{align}
\end{linenomath} where $\mu_{i}(t)\equiv\mu_{i}(\boldsymbol{v}(t),t)$ and $\sigma_{ir}(t)\equiv\sigma_{ir}(\boldsymbol{v}(t),t)$.

Since $\mu_{i}$ and $\sigma_{ir}$ are well-behaved, it can be verified that functions $L\mu_{i}$, $\Lambda_{m}\mu_{i}$, $L\sigma_{ir}$, and $\Lambda_{m}\sigma_{ir}$ all have linear growth bound and are finite when $v\to0$. Thus, the estimation of the magnitude of all the remainders can be immediately found using Lemma 2.2 in Ref. \citep{milstein2013stochastic}.

In the index notation, we denote by $|v_{i}|=(v_{x}^{2}+v_{y}^{2}+v_{z}^{2})^{1/2}$ the Euclidean norm of the vector $\boldsymbol{v}$. The following mean-squared estimation holds, \begin{linenomath} 
\begin{align}
\mathbb{E}|\rho_{i,\lambda}|^{2}\leq K(1+|v_{i}(t)|^{2})h^{p_{\lambda}},\label{eq:integral_estimation}
\end{align}
\end{linenomath} where $p_{1}=4$, $p_{2}=p_{3}=3$, $p_{4}=2$, and $K$ is a constant independent of $h$ and $v_{i}(t)$. In addition, since $\rho_{i,2}$, $\rho_{i,3}$, $\rho_{i,4}$ are Ito integrals, we have \begin{linenomath} 
\begin{align}
\mathbb{E}\rho_{i,2}=\mathbb{E}\rho_{i,3}=\mathbb{E}\rho_{i,4}=0.
\end{align}
\end{linenomath}

(i) One-step strong error.

Denote the one-step approximation by $\bar{\boldsymbol{v}}(t+h)\equiv\bar{\boldsymbol{v}}(t+h;t,\boldsymbol{v}(t))$, and let $\Delta W_{r}=\xi_{r}\sqrt{h}$ where $\xi_{r}\sim\mathcal{N}(0,1)$. The modified ESEC method Eq.~(\ref{eq:ES_method2}) can be written using the Cartesian indices as \begin{linenomath} 
\begin{align}
\bar{v}_{i}(t+h)-v_{i}(t)= & \epsilon_{ijk}\dfrac{v_{j}(t)+\bar{v}_{j}(t+h)}{2}B_{k}(t)h+\alpha\epsilon_{ijk}\epsilon_{jlr}v_{l}(t)\sqrt{h}\xi_{r}\dfrac{v_{k}(t)+\bar{v}_{k}(t+h)}{2v(t)},\nonumber \\
= & \epsilon_{ijk}M_{j}\sqrt{h}\,[v_{k}(t)+\bar{v}_{k}(t+h)],\label{eq:ES_method_compact}\\
M_{j}:= & \alpha\dfrac{\epsilon_{jlr}v_{l}(t)\xi_{r}}{2v(t)}-\dfrac{1}{2}B_{j}(t)\sqrt{h}.\label{eq:M_j}
\end{align}
\end{linenomath}

The difference between the one-step approximation (\ref{eq:ES_method_compact}) and the exact solution (\ref{eq:exact}) is \begin{linenomath} 
\begin{align}
\bar{v}_{i}(t+h)-v_{i}(t+h) & =\epsilon_{ijk}M_{j}\sqrt{h}\big[\bar{v}_{k}(t+h)-v_{k}(t)\big]+\alpha^{2}v_{i}(t)h-\sum_{\lambda}\rho_{i,\lambda},\nonumber \\
 & =\epsilon_{ijk}\epsilon_{kpq}M_{j}M_{p}{h}\,[v_{q}(t)+\bar{v}_{q}(t+h)]+\alpha^{2}v_{i}(t)h-\sum_{\lambda}\rho_{i,\lambda},\label{eq:one_step_error_1}
\end{align}
\end{linenomath} where in the second line, we have used Eq.~(\ref{eq:ES_method_compact}) again to replace the term inside the square bracket. The one-step strong error is given by $\mathbb{E}|\bar{v}_{i}(t+h)-v_{i}(t+h)|^{2}$. Using the inequality$|X_{i}+Y_{i}|^{2}\leq2|X_{i}|^{2}+2|Y_{i}|^{2}$, we can estimate the mean-square bound of the RHS of Eq.~(\ref{eq:one_step_error_1}) term by term. Because $|\epsilon_{ijk}X_{j}Y_{k}|\leq|X_{j}|\cdot|Y_{k}|$, the mean square of the first term is \begin{linenomath} 
\begin{align}
R_{s,1} & :=\mathbb{E}\big|\epsilon_{ijk}\epsilon_{kpq}M_{j}M_{p}{h}\,[v_{q}(t)+\bar{v}_{q}(t+h)]\big|^{2}\leq h^{2}\mathbb{E}\big[|M_{j}|^{4}|v_{q}(t)+\bar{v}_{q}(t+h)|^{2}\big]\\
 & \leq4h^{2}|v_{q}(t)|^{2}\mathbb{E}\big[|M_{j}|^{4}\big],
\end{align}
\end{linenomath} where use is made of $|v_{q}(t)+\bar{v}_{q}(t+h)|\leq|v_{q}(t)|+|\bar{v}_{q}(t+h)|=2|v_{q}(t)|$ in the second line. The bound for $|M_{j}|$ can be calculated as follows. \begin{linenomath} 
\begin{align}
\left|\alpha\dfrac{\epsilon_{jlr}v_{l}(t)\xi_{r}}{2v(t)}-\dfrac{1}{2}B_{j}(t)\sqrt{h}\right|\leq\dfrac{\alpha}{2}\left|\dfrac{\epsilon_{jlr}v_{l}(t)\xi_{r}}{v(t)}\right|+\dfrac{\sqrt{h}}{2}|B_{j}(t)|\leq\dfrac{1}{2}(\alpha|\xi_{r}|+\sqrt{h}|B_{j}(t)|),
\end{align}
\end{linenomath} where we have used $|v_{l}(t)/v(t)|=1$. Therefore, \begin{linenomath} 
\begin{align}
R_{s,1} & \leq\dfrac{h^{2}}{4}|v_{q}(t)|^{2}\mathbb{E}\left[(\alpha|\xi_{r}|+\sqrt{h}|B_{j}(t)|)^{4}\right]\nonumber \\
 & =\dfrac{h^{2}}{4}|v_{q}(t)|^{2}\left[\alpha^{4}\mathbb{E}|\xi_{r}|^{4}+\mathcal{O}(\sqrt{h})\right]\leq K\,|v_{q}(t)|^{2}h^{2}.
\end{align}
\end{linenomath} Here, the boundedness of $B(t)$ has been used and $K$ is a constant independent of $h$ and $v_{q}(t)$.

The estimation of the mean square of the second term in Eq.~(\ref{eq:one_step_error_1}) is straightforward; the estimation of that of the third term is given by Eq.~(\ref{eq:integral_estimation}). Putting the estimations of all three terms together proves Eq.~(\ref{eq:strong_error}), i.e., the one-step strong error of the modified ESEC scheme is at least order $1$.

(ii) One-step weak error.

Next, we calculate the one-step weak error $|\mathbb{E}[\bar{v}_{i}(t+h)-v_{i}(t+h)]|$. Plugging the expression of $\bar{v}_{i}(t+h)$ from Eq.~(\ref{eq:ES_method_compact}) into the RHS of Eq.~(\ref{eq:one_step_error_1}) again, we get \begin{linenomath} 
\begin{align}
\begin{split}\bar{v}_{i}(t+h)-v_{i}(t+h)= & \,\quad2\epsilon_{ijk}\epsilon_{kpq}M_{j}M_{p}hv_{q}(t)+\alpha^{2}v_{i}(t)h\\
 & +\epsilon_{ijk}\epsilon_{kpq}\epsilon_{qmn}M_{j}M_{p}M_{m}h^{3/2}[v_{n}(t)+\bar{v}_{n}(t+h)]\\
 & -\sum_{\lambda}\rho_{i,\lambda}.
\end{split}
\label{eq:one_step_error_2}
\end{align}
\end{linenomath}

With the intention to apply the inequality $|\mathbb{E}(X+Y)|\leq|\mathbb{E}(X)|+|\mathbb{E}(Y)|$, we estimate the mean of the RHS of Eq.~(\ref{eq:one_step_error_2}) line by line. Since $|\mathbb{E}X|\leq\mathbb{E}|X|$, the bound of the mean for the second line is \begin{linenomath} 
\begin{align}
R_{w,2} & :=\left|\mathbb{E}\big[\epsilon_{ijk}\epsilon_{kpq}\epsilon_{qmn}M_{j}M_{p}M_{m}h^{3/2}[v_{n}(t)+\bar{v}_{n}(t+h)]\big]\right|\nonumber \\
 & \leq\mathbb{E}\big|\epsilon_{ijk}\epsilon_{kpq}\epsilon_{qmn}M_{j}M_{p}M_{m}h^{3/2}[v_{n}(t)+\bar{v}_{n}(t+h)]\big|\nonumber \\
 & \leq h^{3/2}\mathbb{E}\big[|M_{j}|^{3}|v_{n}(t)+\bar{v}_{n}(t+h)|\big]\leq\dfrac{h^{3/2}}{4}|v_{n}(t)|\mathbb{E}\left[(\alpha|\xi_{r}|+\sqrt{h}|B_{j}(t)|)^{3}\right]\nonumber \\
 & =\dfrac{h^{3/2}}{4}|v_{n}(t)|\left[\alpha^{3}\mathbb{E}|\xi_{r}|^{3}+\mathcal{O}(\sqrt{h})\right]\leq K|v_{n}(t)|h^{3/2}.
\end{align}
\end{linenomath}

In the third line, $\rho_{i,2}$, $\rho_{i,3}$, $\rho_{i,4}$ vanish when taking expectation. The bound of the mean for $\rho_{i,1}$ can be estimated by the Cauchy-Schwarz inequality, \begin{linenomath} 
\begin{align}
|\mathbb{E}\rho_{i,1}|\leq\mathbb{E}|\rho_{i,1}|\leq\sqrt{\mathbb{E}|\rho_{i,1}|^{2}}\leq K(1+|v_{i}(t)|^{2})^{1/2}h^{2},
\end{align}
\end{linenomath}

For the first line, plugging in the definition of $M_{j}$ and after some algebra, we have \begin{linenomath} 
\begin{align}
R_{w,1}:= & \mathbb{E}L_{1}\nonumber \\
L_{1}:= & \,2\epsilon_{ijk}\epsilon_{kpq}M_{j}M_{p}hv_{q}(t)+\alpha^{2}v_{i}(t)h\nonumber \\
\begin{split}= & \,2\alpha^{2}\epsilon_{ijk}\epsilon_{kpq}\dfrac{\epsilon_{jlr}v_{l}(t)\xi_{r}}{2v(t)}\dfrac{\epsilon_{pmn}v_{m}(t)\xi_{m}}{2v(t)}v_{q}(t)h+\alpha^{2}v_{i}(t)h\\
 & -\alpha\dfrac{h^{3/2}}{2v(t)}\epsilon_{ijk}\epsilon_{kpq}\Big[\epsilon_{jlr}v_{l}(t)\xi_{r}B_{p}(t)+\epsilon_{pmn}v_{m}(t)\xi_{n}B_{j}(t)\Big]v_{q}(t)\\
 & +\dfrac{h^{2}}{2}\epsilon_{ijk}\epsilon_{kpq}B_{j}(t)B_{p}(t)v_{q}(t).
\end{split}
\label{eq:one_step_error_3}
\end{align}
\end{linenomath} In Eq.~(\ref{eq:one_step_error_3}), the third line vanishes under expectation because $\mathbb{E}\xi_{r}=0$; the fourth line is on the order of $\mathcal{O}(h^{2})$. To calculate the mean of the first term in the second line, notice that $\mathbb{E}(\xi_{i}\xi_{j})=\delta_{ij}$, and we have \begin{linenomath} 
\begin{align}
 & \mathbb{E}\left[2\alpha^{2}(\epsilon_{ijk}\epsilon_{jlr})(\epsilon_{kpq}\epsilon_{pmn})\dfrac{v_{l}(t)\xi_{r}}{2v(t)}\dfrac{v_{m}(t)\xi_{n}}{2v(t)}v_{q}(t)h\right]\\
= & \alpha^{2}(\delta_{kl}\delta_{ir}-\delta_{kr}\delta_{il})(\delta_{qm}\delta_{kn}-\delta_{qn}\delta_{km})\left[\dfrac{v_{l}(t)v_{m}(t)v_{q}(t)}{2v^{2}(t)}\right]\mathbb{E}(\xi_{r}\xi_{n})h\\
= & -\alpha^{2}v_{i}(t)h,
\end{align}
\end{linenomath} where use is made of $\delta_{ii}=3$. Notice that the expectation, $-\alpha^{2}v_{i}(t)h$, is exactly the deterministic drift in the Ito-Taylor expansion in Eq.~(\ref{eq:exact}) and cancels out the second term in the second line of Eq.~(\ref{eq:one_step_error_3}). Hence, the second line in Eq.\,(\ref{eq:one_step_error_3}) vanishes under expectation, and we have 
\begin{equation}
R_{w,1}=\mathcal{O}(h^{2}).
\end{equation}

Combining all the estimations for terms on the RHS of Eq.~(\ref{eq:one_step_error_2}) proves Eq.~(\ref{eq:weak_error}), i.e., the one-step weak error of the modified ESEC method is at least order $3/2$.

(iii) Strong convergence of the ESEC method.

We have proved that the strong and weak errors in one step between the modified ESEC method in Eq.~(\ref{eq:ES_method2}) and the solution of Ito SDE Eq.~(\ref{eq:Ito_SDE2}) are on order $1$ and order $3/2$, respectively. In addition, the coefficients of Eq.~(\ref{eq:Ito_SDE2}) are globally Lipschitz continuous (see Appendix \ref{sec:appenix}) and have linear growth bound. Based on Theorem 1.1 in Ref. \citep{milstein2013stochastic}, the numerical solutions constructed by Eq.~(\ref{eq:ES_method2}) converge to exact solutions of Eq.~(\ref{eq:Ito_SDE2}) with order $1/2$ strong error on the entire time interval.

This completes the proof of Lemma \ref{lemma}. 
\end{proof}
We now prove Theorem \ref{thm}. 
\begin{proof}
Notice that for both Eq.~(\ref{eq:Ito_SDE}) and Eq.~(\ref{eq:Ito_SDE2}), the velocity norm is preserved by the solution, which can be verified by applying the Ito formula \citep{kloeden2013numerical} to $v$, \begin{linenomath} 
\begin{align}
\mathrm{d}v=\left[\dfrac{\partial v}{\partial t}+\boldsymbol{\mu}\cdot\dfrac{\partial v}{\partial\boldsymbol{v}}+\dfrac{1}{2}\Tr\left(\boldsymbol{\sigma}^{\mathrm{T}}\dfrac{\partial^{2}v}{\partial\boldsymbol{v}\partial\boldsymbol{v}}\boldsymbol{\sigma}\right)\right]\mathrm{d}t+\dfrac{\partial v}{\partial\boldsymbol{v}}\cdot\boldsymbol{\sigma}\cdot\mathrm{d}\boldsymbol{W}=0,\label{eq:dv=00003D00003D0}
\end{align}
\end{linenomath} where $\boldsymbol{\mu}$ and $\boldsymbol{\sigma}$ are the drift and diffusion coefficients in either Eq.~(\ref{eq:Ito_SDE}) or Eq.~(\ref{eq:Ito_SDE2}). Thus any analytical solution $\boldsymbol{v}(t)$ of Eq.~(\ref{eq:Ito_SDE}) with initial condition $\boldsymbol{v}(t_{0})=\boldsymbol{v}_{0}\neq0$ is also the solution of Eq.~(\ref{eq:Ito_SDE2}) with the same initial condition if we choose the constant $\alpha=\sqrt{D(v_{0})}/v_{0}$.

On the other hand, both the ESEC method given by Eq.~(\ref{eq:ES_method}) for Eq.\,(\ref{eq:Ito_SDE}) and the modified ESEC method by Eq.~(\ref{eq:ES_method2}) for Eq.~(\ref{eq:Ito_SDE2}) preserve the velocity norm exactly. For any given initial condition $\boldsymbol{v}_{0}\neq0$, the numerical solution generated by the ESEC method (\ref{eq:ES_method}) for Eq.~(\ref{eq:Ito_SDE}) is exactly the same as that by the modified ESEC method (\ref{eq:ES_method2}) for Eq.~(\ref{eq:Ito_SDE2}), if we choose $\alpha=\sqrt{D(v_{0})}/v_{0}$.

According to Lemma \ref{lemma}, the modified ESEC method (\ref{eq:ES_method2}) generates numerical solutions converging to the exact solutions of Eq.~(\ref{eq:Ito_SDE2}) with order $1/2$ strong error. Combining these facts, we conclude that the ESEC method (\ref{eq:ES_method}) generates numerical solutions converging to the exact solutions of Eq.~(\ref{eq:Ito_SDE}) with order $1/2$ strong error.

This completes the proof of Theorem \ref{thm} 
\end{proof}
Two comments are in order. First, in Lemma \ref{lemma}, the constant $K$ is independent of initial condition $\boldsymbol{v}_{0}$, but depends on the constant $\alpha$. In order to make Eq.~(\ref{eq:Ito_SDE}) and Eq.~(\ref{eq:Ito_SDE2}) equivalent, the choice of $\alpha$ depends on the initial condition, which makes the constant $K$ in Theorem \ref{thm} dependent on initial conditions. Second, the proof relies on the fact that the velocity norm $v$ is conserved for both analytical solutions and numerical solutions. For the standard EM method implemented in Cartesian coordinates, the numerical solution does not preserve the velocity norm, and as a consequence the convergence cannot be established. In Sec.\,\ref{sec:divergent_sample_path}, divergent sample paths of the EM method will be demonstrated.

\section{\label{sec:numerical_experiments} Numerical verfication of convergence }

\subsection{Strong and weak convergence}

In this and the following sections, we compare the ESEC scheme and the EM scheme implemented in Cartesian coordinates (labeled by EM\_C). Since the analytical solution of Eq.~(\ref{eq:Ito_SDE}) is unknown, it is not feasible to calculate the strong and weak errors according to Eqs.~(\ref{eq:strong_error}) and (\ref{eq:weak_error}). Thus we define the following relative errors between different step sizes $h_{l}$ and $h_{l+1}$, \begin{linenomath} 
\begin{align}
\bar{\epsilon}_{\text{s}}(t;t_{0},\boldsymbol{v}_{0},h_{l}) & :=\left[\mathbb{E}|\bar{\boldsymbol{v}}(t;t_{0},\boldsymbol{v}_{0},h_{l+1})-\bar{\boldsymbol{v}}(t;t_{0},\boldsymbol{v}_{0},h_{l})|^{2}\right]^{1/2},\label{eq:strong_error2}\\
\bar{\epsilon}_{\text{w}}(t;t_{0},\boldsymbol{v}_{0},h_{l}) & :=\left|\mathbb{E}\left[\bar{\boldsymbol{v}}(t;t_{0},\boldsymbol{v}_{0},h_{l+1})-\bar{\boldsymbol{v}}(t;t_{0},\boldsymbol{v}_{0},h_{l})\right]\right|.\label{eq:weak_error2}
\end{align}
\end{linenomath} It is easy to prove that for a numerical scheme with order $p_{\text{w}}$ weak error and order $p_{\text{s}}$ strong error, $\bar{\epsilon}_{\text{s/w}}\sim\mathcal{O}(h_{l}^{p_{\text{s/w}}})$ as long as $h_{l+1}<h_{l}$ (see Appendix \ref{sec:appenix}).

\begin{figure}[ht]
\centering \begin{subfigure}[b]{0.4\textwidth} \centering \includegraphics[width=1\textwidth]{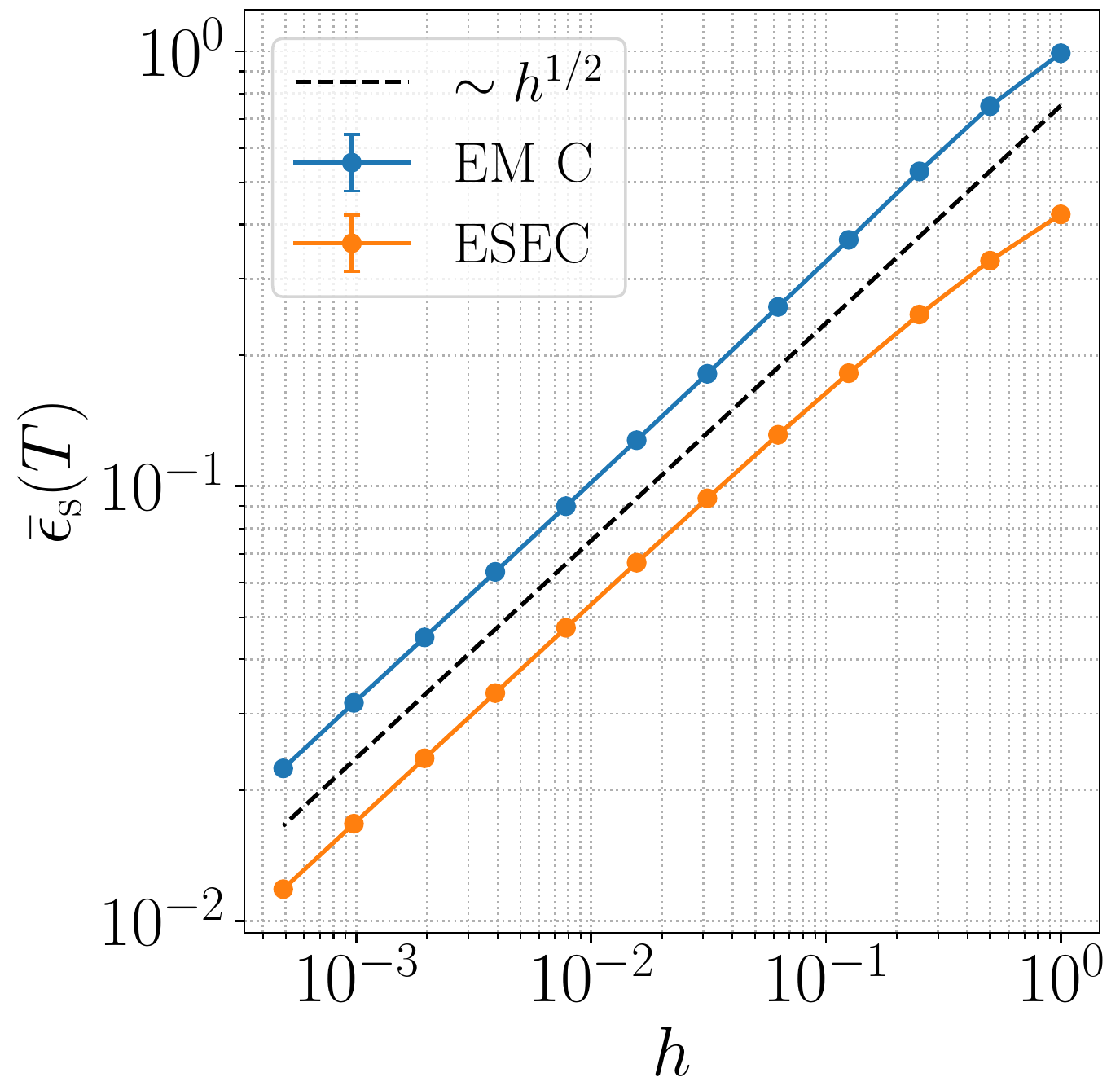} \caption{Strong errors}
\label{fig:strong_error} \end{subfigure} \hspace{0.05\textwidth} \begin{subfigure}[b]{0.4\textwidth} \centering \includegraphics[width=1\textwidth]{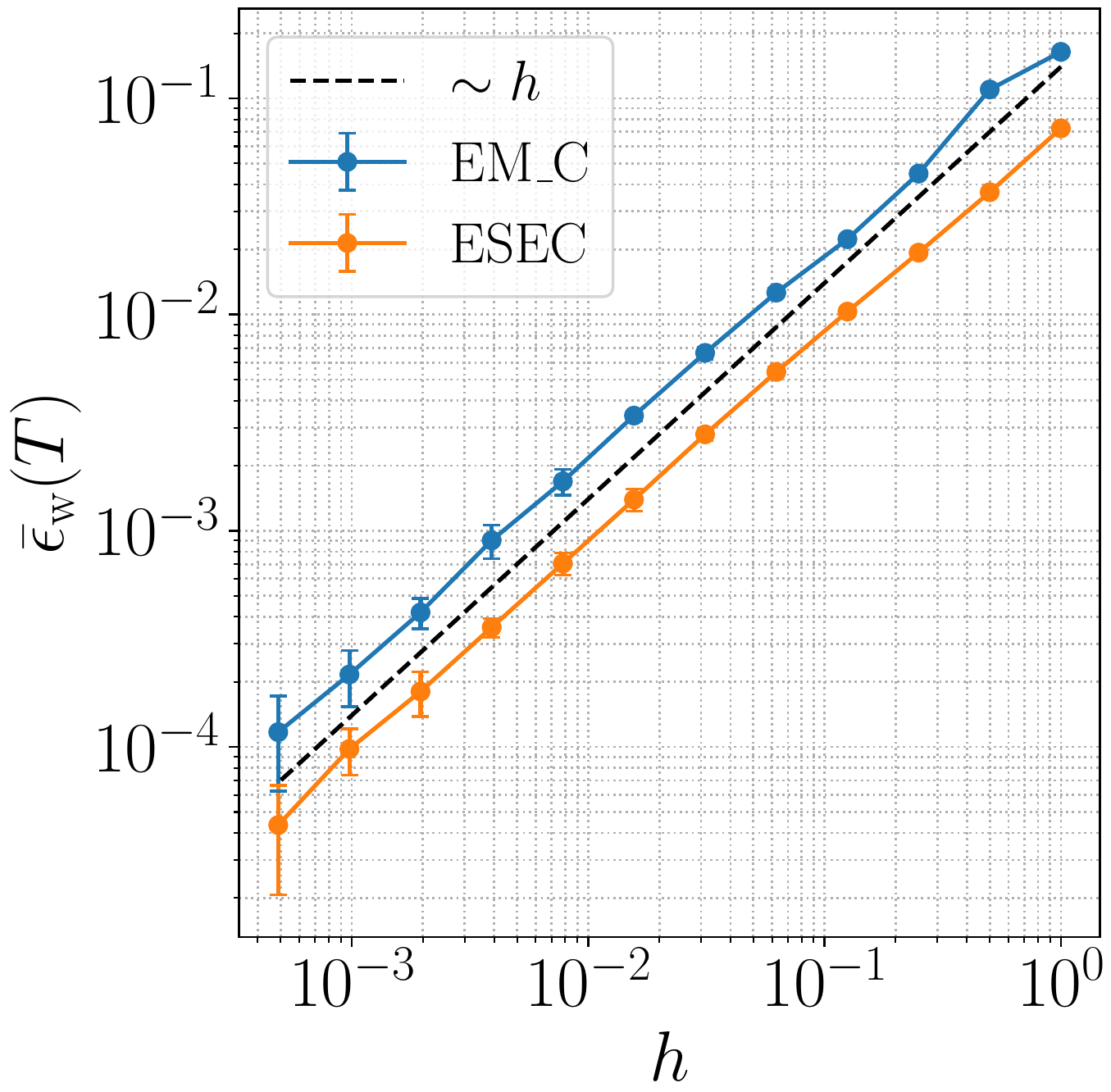} \caption{Weak errors}
\label{fig:weak_error} \end{subfigure} \caption{Strong (a) and weak (b) errors for EM\_C and ESEC schemes. Reference lines with order $1/2$ and $1$ are shown as dashed lines. The initial condition is $\boldsymbol{v}_{0}=(0,0,1)$ and the magnetic field is $\boldsymbol{B}=(0,0,1)$. The error bars in strong errors are too small to be visible. The sample size is $5\times10^{6}$.}
\label{fig:strong_and_weak_errors}
\end{figure}

With the definition and fact above, the relative strong and weak errors at the end time $t=T=1$ for $h_{l}=2^{-l}$ $(l=1\cdots12)$ are plotted in Fig.~\ref{fig:strong_and_weak_errors}. Figure \ref{fig:strong_error} shows that both the ESEC and EM\_C methods have order $1/2$ strong error, but the strong convergence for the EM\_C method cannot be rigorously established due to the violation of the global Lipschitz condition. Although we did not establish the weak order of the two schemes for Eq.\,(\ref{eq:Ito_SDE}) in Sec.\,\ref{sec:proof_of_convergence}, Fig.\,\ref{fig:weak_error} shows that the both schemes have order $1$ weak error in the entire time interval. In addition, the ESEC scheme has smaller scale constants for both strong and weak errors than the EM\_C scheme.

\subsection{\label{sec:conserve_norm} Conservation of the velocity norm}

To compare the conservation of the velocity norm $v$, we define the following strong error of velocity magnitude, \begin{linenomath} 
\begin{align}
\epsilon_{v}(t;t_{0},\boldsymbol{v}_{0},h):=\big[\mathbb{E}|\bar{v}(t;t_{0},\boldsymbol{v}_{0},h)-v_{0}|^{2}\big]^{1/2},
\end{align}
\end{linenomath} where $\bar{v}(t;t_{0},\boldsymbol{v}_{0},h)$ is the numerically calculated velocity norm at time $t$ with initial condition $\boldsymbol{v}_{0}$ at $t=t_{0}$ and step size $h$.

\begin{figure}[ht]
\centering \begin{subfigure}[t]{0.4\textwidth} \centering \includegraphics[height=6.5cm]{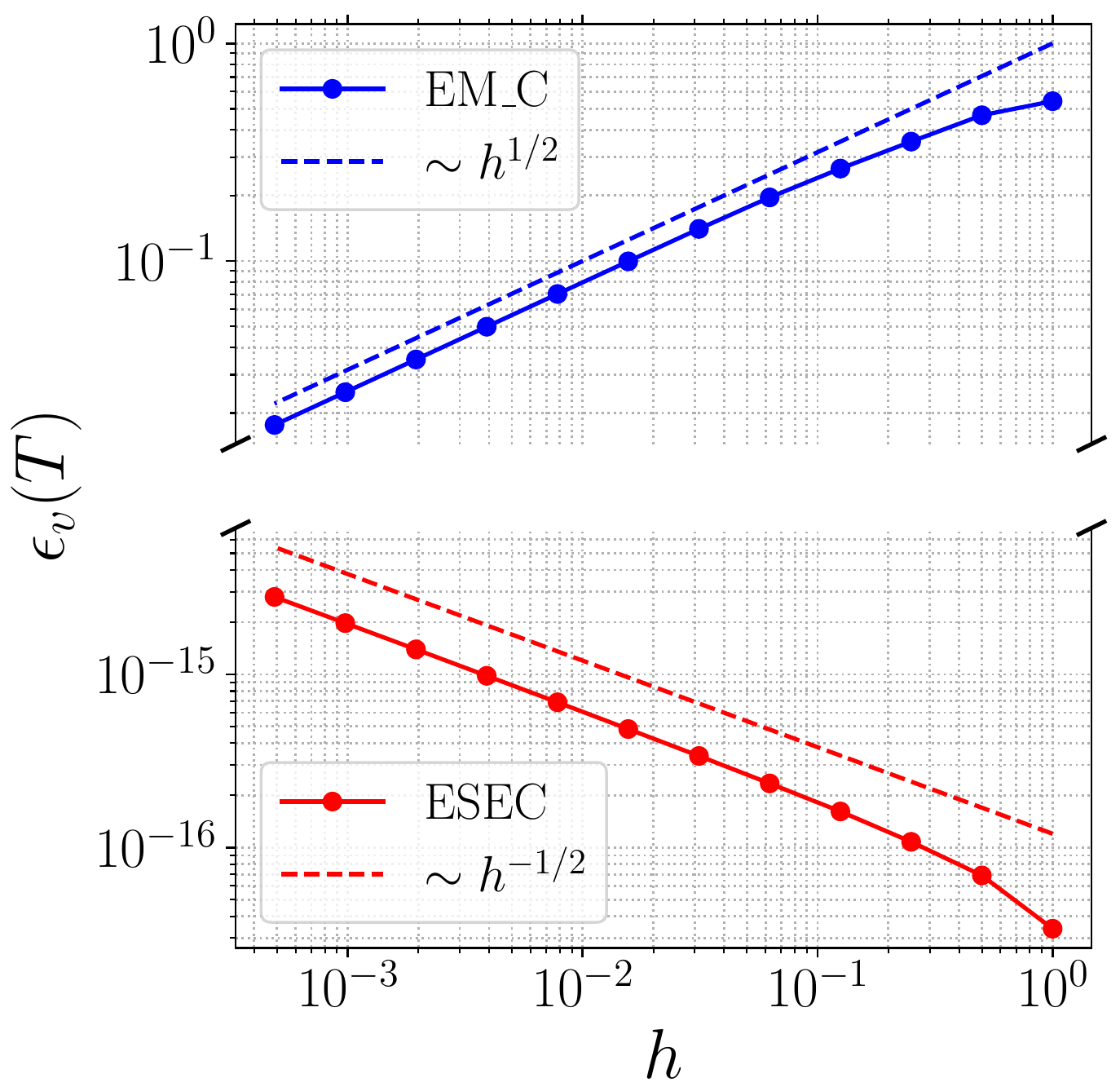} \caption{Strong errors of $v$ as a function of the time-step size $h$ at the fixed end time $t$.}
\label{fig:strong_error_v} \end{subfigure} \hspace{0.05\textwidth} \begin{subfigure}[t]{0.4\textwidth} \centering \includegraphics[height=6.5cm]{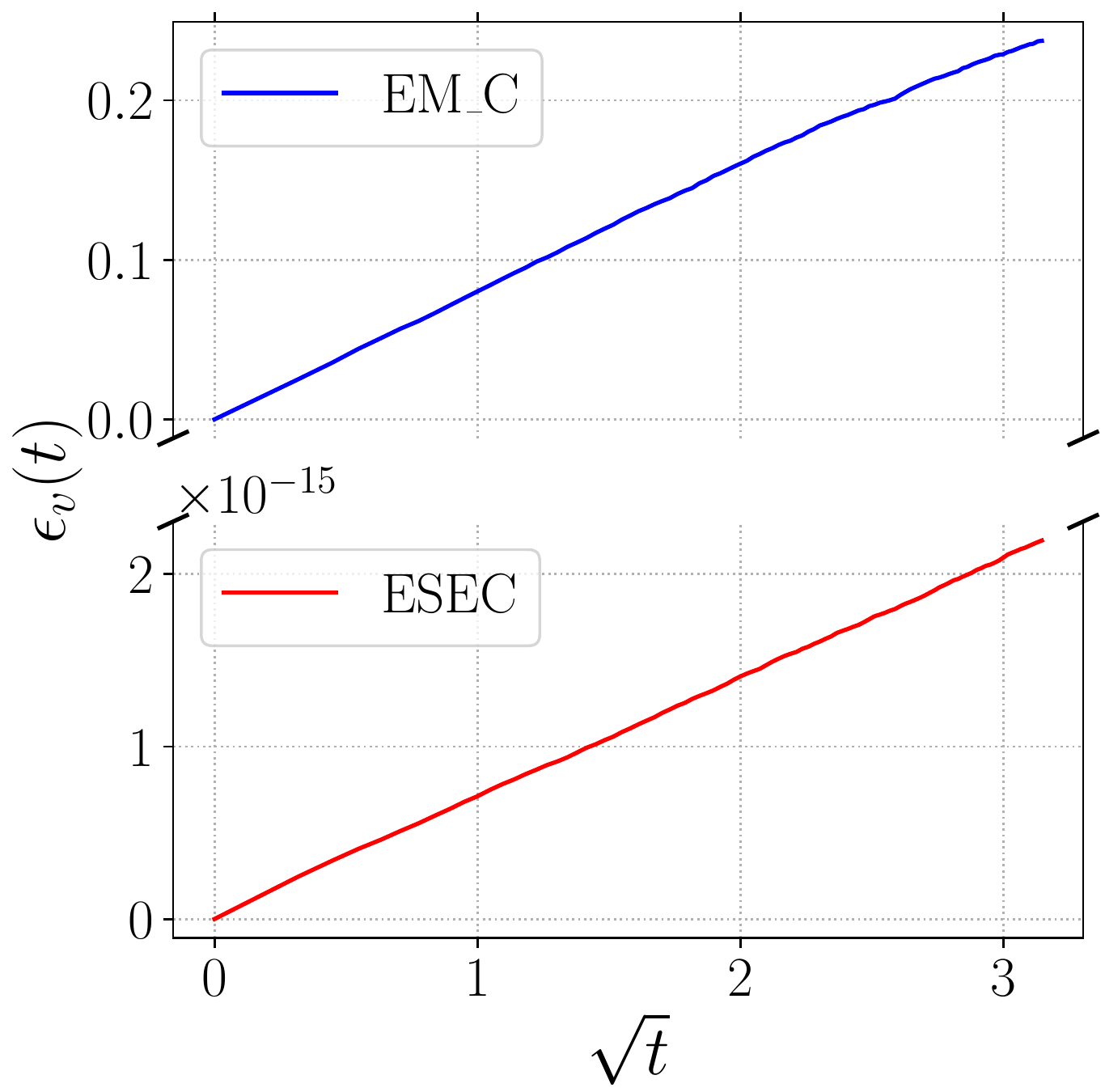} \caption{Strong errors of $v$ as a function of $\sqrt{t}$ with a fixed time-step size $h$.}
\label{fig:strong_error_v2} \end{subfigure} \caption{The strong errors of velocity norm. The vertical axes are separated into two parts because the error of the two schemes are different by more than 10 orders of magnitude. The initial condition is $\boldsymbol{v}_{0}=(0,0,1)$ and the magnetic field is $\boldsymbol{B}=(0,0,1)$.}
\end{figure}

Figure \ref{fig:strong_error_v} shows the strong errors of the velocity norm $\epsilon_{v}$ at the fixed end time $t=T=3$ as a function of $h\in[2^{-12},2^{-1}]$. For the EM\_C method, $\epsilon_{v}\sim h^{1/2}$, consistent with its usual strong error. For the ESEC method, due to its energy-preserving property, the magnitude of $\epsilon_{v}$ is the same as the machine precision which is $10^{-15}$. The interesting result that $\epsilon_{v}\sim h^{-1/2}$ can be understood as follows. Because of machine error, the value of $\delta v=v-v_{0}$ performs a random walk at each time step. For a calculation with total number time steps $L=T/h$, we have $|\delta v(T)|\sim L^{1/2}\sim h^{-1/2}$. Shown in Fig.~\ref{fig:strong_error_v2} are the strong errors of the velocity norm with a fixed time step $h=0.01$ for $t\in[0,10]$. We can see that for both schemes $\epsilon_{v}\sim h^{1/2}$. This is because the errors at different time steps are independent and the accumulated errors behave like random walks. The difference between the two methods is that the error for the EM\_C method is its truncation error, but that for the ESEC method is the machine error.

In Fig.~\ref{fig:strong_error_v2}, we observe that at $t=10$, the strong error of the velocity norm for the EM\_C method is more than 20\% of its initial value. This type of error in the velocity norm is deleterious. For some sample paths, it can push $v$ towards zero, where the drift and diffusion coefficients are singular, which in turn leads to the divergence of the sample paths. Such sample paths will be numerically demonstrated in the next section. To avoid this divergence, one method is to regularize the coefficients of the SDE. For example, one can modify the coefficients when $v$ is smaller than a chosen critical value $v_{c}$ \citep{rosin2014multilevel} (see Appendix \ref{sec:rEM_C} for more details). 

Another choice is to switch to spherical coordinates. Since the velocity magnitude is a constant of motion according to Eq.\,(\ref{eq:Ito_psi}), it conserved exactly without machine error. In this sense, the EM method implemented in spherical coordinates has an even better energy-conservation property than the ESEC method, if one is not concerned with the singularity at $\mu=\pm1$ for Eq.\,(\ref{eq:Ito_psi}). In practice, regularization techniques in spherical coordinates are still necessary for the dynamics of azimuthal angle described by Eq.\,(\ref{eq:Ito_psi}) (see Appendix \ref{sec:rEM_s} for more details). In comparison, modifications of the governing equations for the purpose of avoiding numerical singularity are not necessary for the ESEC method.

\section{\label{sec:constant_B_benchmark} pitch-angle scattering in a constant magnetic field}

In this section we benchmark our algorithm against an analytical solvable problem. Consider the electron-ion collisions of a magnetized plasma in a constant magnetic field. The electrons have a uniform initial velocity $\boldsymbol{v}_{0}$ and the ions are cold and stationary. In the limit of $m_{e}/m_{i}\to0$, the electron trajectory can be described by Ito SDE (\ref{eq:Ito_SDE}), and the time evolution of the electron distribution function $f_{e}$ is described by the FP equation (\ref{FP}). For comparison, the problem will be calculated using the ESEC algorithm, the normal EM algorithm in Cartesian coordinates (refer to `EM\_C'), the regularized EM algorithm in Cartesian coordinates (refer to `rEM\_C', see Appendix \ref{sec:rEM_C}), and the regularized EM algorithm in spherical coordinates (refer to `rEM\_s', see Appendix \ref{sec:rEM_s}).

\subsection{\label{sec:divergent_sample_path} Single-electron trajectories }

Firstly, we compare the electron trajectory calculated by different algorithms with the same underlying Wiener process. The method to transform the same Wiener process to different coordinates is discussed in Appendix \ref{sec:generate_weiner}. Sample paths of $v$ and $\mu$ calculated with different algorithms are shown in Fig.~\ref{fig:exploded}.

\begin{figure}[ht]
\centering \begin{subfigure}[t]{0.45\textwidth} \centering \includegraphics[width=1\textwidth]{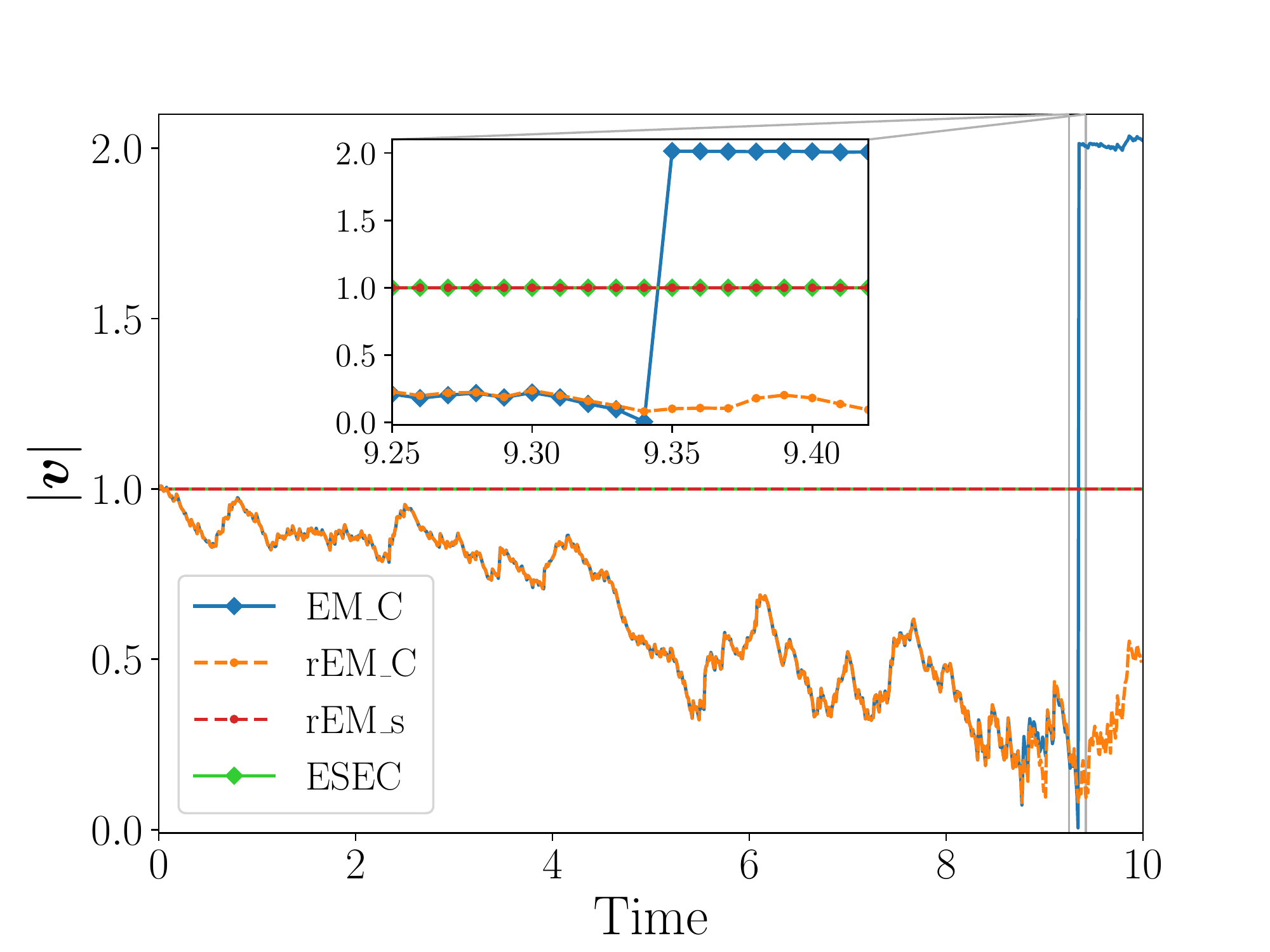} \caption{Sample paths of $v(t)$.}
\label{fig:exploded_v} \end{subfigure} \hspace{0.05\textwidth} \begin{subfigure}[t]{0.45\textwidth} \centering \includegraphics[width=1\textwidth]{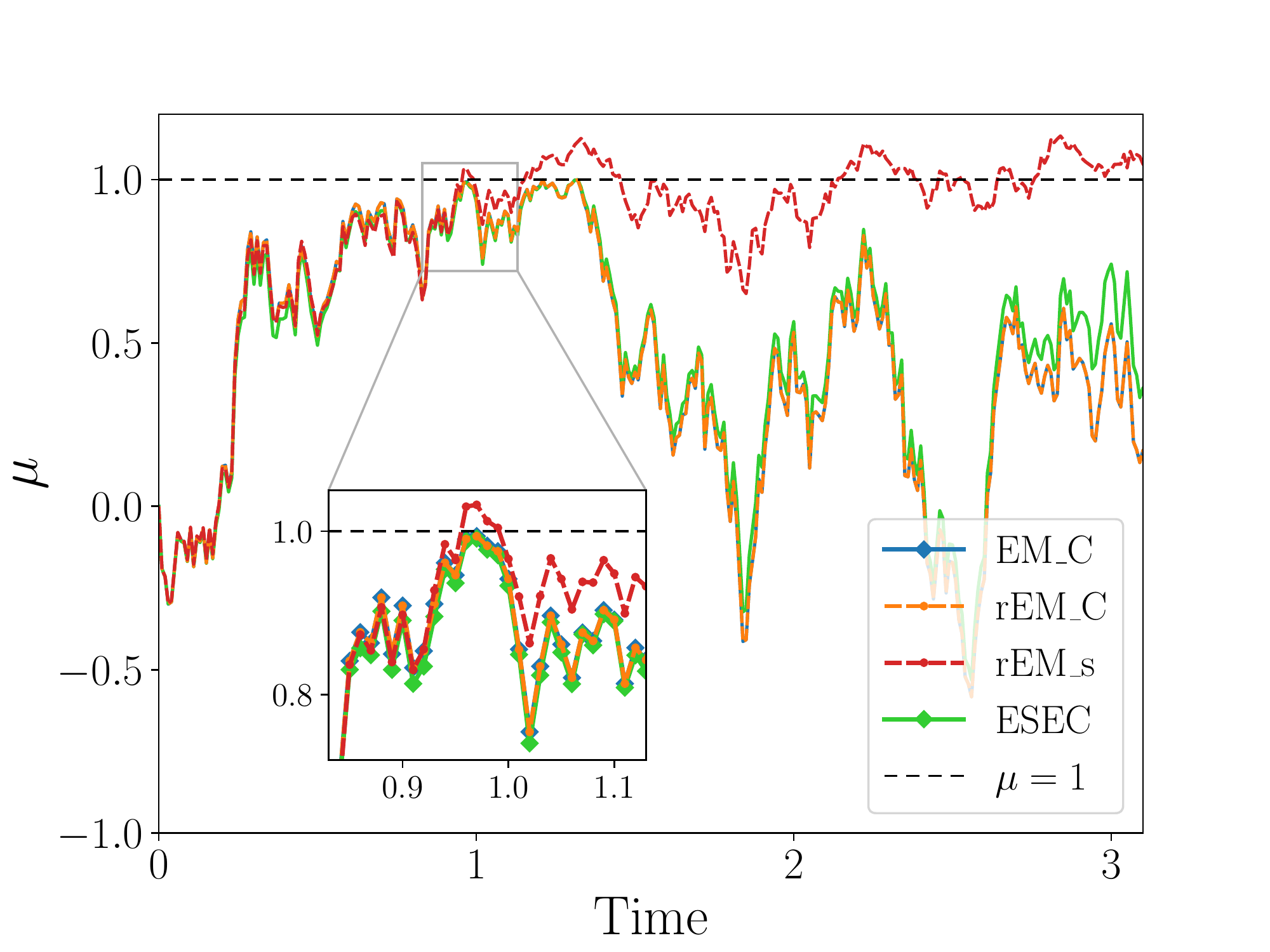} \caption{Sample paths of $\mu(t)$.}
\label{fig:exploded_mu} \end{subfigure} \caption{Sample paths calculated with the standard EM in Cartesian coordinates (blue), the regularized EM in Cartesian coordinates (orange), the regularized EM in spherical coordinates (red), and the ESEC algorithm (green). In Fig.~(a), the results calculated by `ESEC' and `rEM\_s' algorithm coincide with each other since they both conserve the velocity magnitude. The dashed line in (b) marks $\mu=1$.}
\label{fig:exploded}
\end{figure}

Fig.~\ref{fig:exploded_v} shows the sample paths of $v(t)$ from four algorithms. For the EM\_C algorithm, the sample path is divergent when $v\to0$. At $t=9.34$, $v$ is smaller than $0.005$ due to the accumulation of truncation errors over many time steps, but one time step later, $v$ jumps to a value larger than $2$. After regularization, the sample path no longer diverges, but the velocity magnitude still deviates from its original value. On the other hand, for the sample paths given by the ESEC algorithm or the EM algorithm in spherical coordinates, the velocity magnitudes remain constant throughout the whole calculation.

Fig.~\ref{fig:exploded_mu} shows the sample paths of $\mu(t)$. By definition, $\mu$ is always less than $1,$ i.e., $|\mu|\leq1$. If the numerical calculation is done in Cartesian coordinates, this condition is automatically satisfied, which is indeed the case for the EM\_C, rEM\_C, and ESEC algorithms in Fig.~\ref{fig:exploded_mu}. However, if the sample path is calculated in the spherical coordinates, this condition is not guaranteed. Due to the intrinsic singularity of the spherical coordinate system, unphysical behaviors may occur at $|\mu|\to1$. In Fig.~\ref{fig:exploded_mu}, the sample path given by rEM\_s deviates from those by the other three algorithms around $t=0.95$ when $\mu$ is close to 1. Then it exceeds 1 and becomes drastically different from the others. It is worth mention that the behavior of $\mu$ calculated in spherical coordinates when $|\mu|\to1$ depends on the implementation and regularization techniques. For example, one can avoid $|\mu|>1$ by directly using the polar angle $\theta$ instead of $\cos\theta$ \citep{lemons2009small,brillinger2012particle}, and the SDE becomes $\mathrm{d}\theta=\frac{1}{2}D_{a}\cot\theta\,\mathrm{d}t+\sqrt{D_{a}}\mathrm{d}W_{\theta}$. However, the singularities at $\theta=0,\pi$ still exist because the coefficient of $\mathrm{d}t$ in the SDE is proportional to $\cot\theta$. Detailed study comparing different implementation and regularization methods in spherical coordinates is beyond the scope of this paper.

\subsection{Electron distribution functions}

From the single-electron trajectories, we see that using the EM method to calculate the pitch-angle scattering brings about a few challenges. In Cartesian coordinates, the EM method does not conserve the energy and the governing differential equation does not satisfied the global Lipschitz condition due to the singularity at $v=0$. If one adopts the EM method in spherical coordinates, then the energy is a constant of motion according to Eq.\,(\ref{eq:Ito_sph_r}), but $\mu=\pm1$ now become singularities for Eq.\,(\ref{eq:Ito_psi}). This issue can be clearly demonstrated when these different algorithms are applied to calculate the distribution function and compared with the analytical solution.

Assume that $\boldsymbol{B}=(0,0,B)$ is in the $\hat{z}$ direction. In the velocity spherical coordinates $(v,\theta,\varphi)$, Eq.~(\ref{FP}) becomes \begin{linenomath} 
\begin{align}
\dfrac{\partial}{\partial t}f_{e}(\boldsymbol{v},t) & =B\dfrac{\partial f_{e}}{\partial\varphi}+\mathcal{L}[f_{e}],\label{FP-spherical}\\
\mathcal{L} & =\dfrac{1}{2v^{3}}\left[\frac{1}{\sin\theta}\frac{\partial}{\partial\theta}\left(\sin\theta\frac{\partial}{\partial\theta}\right)+\frac{1}{\sin^{2}\theta}\frac{\partial^{2}}{\partial\varphi^{2}}\right].
\end{align}
\end{linenomath} The initial condition is $f_{e0}=\delta(\boldsymbol{v}-\boldsymbol{v}_{0})$, $\boldsymbol{v}_{0}=(v_{0},\theta_{0},\varphi_{0})$. Since Eq.\,(\ref{FP-spherical}) is a linear equation in the spherical coordinates, it can be solved in terms of the spherical harmonics \citep{PhysRevE.102.033302}. The solution of the initial value problem is \begin{linenomath} 
\begin{align}
f_{e}(\boldsymbol{v},t)=\dfrac{\delta(v-v_{0})}{v_{0}^{2}\sin\theta_{0}}\sum_{l=0}^{+\infty}\sum_{m=-l}^{l}\overline{Y_{l}^{m}(\theta_{0},\varphi_{0})}Y_{l}^{m}(\theta,\varphi+Bt)\exp[-\dfrac{1}{2v_{0}^{3}}l(l+1)t],\label{eq:analytical_solution}
\end{align}
\end{linenomath} where $\bar{Y}_{l}^{m}$ is the complex conjugate of ${Y}_{l}^{m}$.

From the analytical solution, we can draw several physical insights. Firstly, the coefficients of all the spherical harmonics decay exponentially in time, except for the $l=0$ component. In the limit of $t\to\infty$, $f_{e}$ becomes uniform on the $v=v_{0}$ sphere. Secondly, the magnetic field only influences the dynamics of the azimuthal angle. The azimuthally averaged quantities, such as those of the parallel and perpendicular velocities, evolve in the exactly same way as in the un-magnetized case \citep{PhysRevE.102.033302}.

\begin{figure}[ht]
\centering \includegraphics[width=0.9\textwidth]{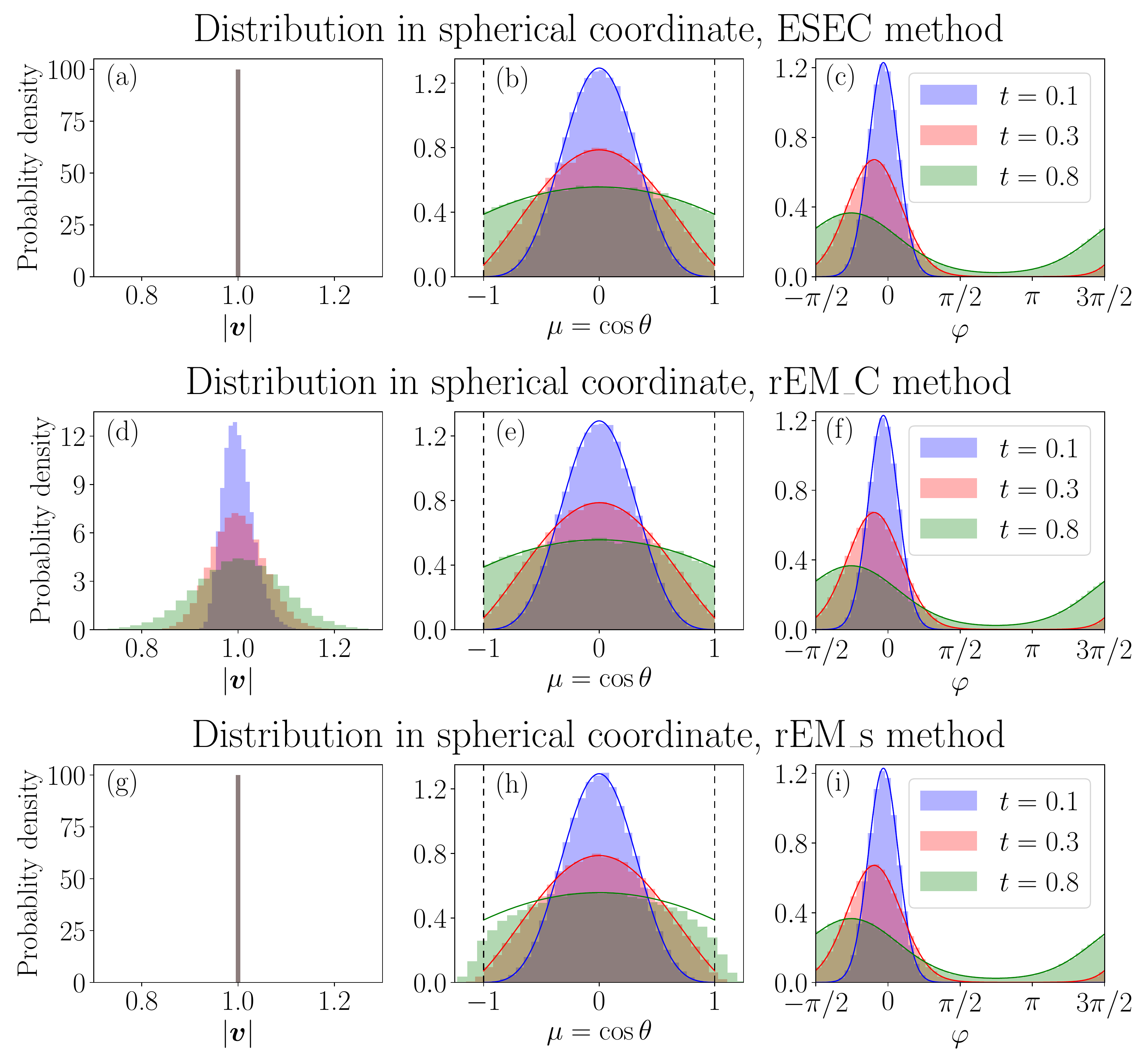} \caption{Distributions in $v$, $\mu=\cos\theta$ and $\varphi$, calculated by the ESEC algorithm, the EM algorithm in Cartesian coordinates (rEM\_C) and spherical coordinates (rEM\_s). The histograms show the distributions obtained numerically by the three algorithms at different times. The solid lines are the analytical solution obtained at each time by integrating Eq.~(\ref{eq:analytical_solution}) and adding up all components with $l\protect\leq50$. The dashed lines mark $\mu=\pm1$.}
\label{fig:theta_phi_distribution}
\end{figure}

We now solve the same problem numerically using the ESEC algorithm and the regularized EM algorithm in both Cartesian and spherical coordinates. Then we benchmark the numerical solution against the analytical solution (\ref{eq:analytical_solution}). As an example, we choose $\boldsymbol{v}_{0}=(1,0,0)$ and $\boldsymbol{B}=(0,0,1)$. In the spherical coordinates, the initial conditions for Eq.\,(\ref{eq:Ito_SDE}) are $v_{0}=1$, $\mu_{0}=0$, and $\varphi_{0}=0$. The distribution function is numerically calculated from $10^{5}$ sample paths with time-step size $h=0.01$. The distributions in spherical coordinates at different times are shown by the histogram in Fig.~\ref{fig:theta_phi_distribution}, while the analytical solutions are shown by the solid lines. The azimuthal angular distributions calculated by all three algorithms are almost identical, while the other two distributions show some interesting differences. The distribution over $\mu$ spreads out from $\mu_{0}=0$ and gradually becomes uniform between $\pm1$. However, in Fig.~\ref{fig:theta_phi_distribution}{\color{blue}h}, the distribution calculated in spherical coordinates gradually leaks out from the boundary $|\mu|=1$, which leads to unphysical results. In comparison, the distributions calculated in Cartesian coordinates do not have this problem. On the other hand, for the distribution in velocity magnitude, the ESEC algorithm and the EM algorithm in spherical coordinates retain the initial delta function at $v=1$ for all time, which is the same as the analytical solution. In contrast, although the distribution calculated by the EM algorithm in Cartesian coordinates is centered around $v=1$, it numerically diffuses over time, which reflects the fact that the EM algorithm in Cartesian coordinates does not conserve energy.

\section{Conclusions and discussions}

In this paper, we have constructed an Explicitly Solvable Energy-Conserving (ESEC) algorithm for the SDE (\ref{eq:Ito_SDE}) describing the pitch-angle scattering process in magnetized plasmas. We have rigorously proved and numerically verified the order $1/2$ strong convergence of the algorithm. The algorithm has also been benchmarked using the analytical solution for the pitch-angle scattering in a constant magnetic field.

In Cartesian coordinates, the coefficients of the Eq.\,(\ref{eq:Ito_SDE}) are singular at $v=0$. In spherical coordinates, the coefficients of the Eq.\,(\ref{eq:Ito_SDE_spherical_coordinate}), (\ref{eq:Ito_psi}) are also singular at $|\mu|=1$. These coefficients do not satisfy the global Lipschitz condition. When standard numerical methods, such as the Euler-Maruyama, are applied, numerical convergence is difficult to establish, and unphysical results may show up. For the proposed ESEC algorithm, the Cartesian-coordinates-based calculation and its energy-preserving property enable us to overcome these obstacles.

We would like to emphasize that pitch-angle scattering is not a physics process that only appears in the zero mass-ratio limit. For collisions between two species with small but finite mass ratio, such as those between electrons and ions in fusion plasmas, the collisional effects will include frictional slowing down and diffusion in the incoming (parallel) direction. Similarly, external electrical field will accelerate particles. In these circumstances, pitch-angle scattering is still an important, if not the most important, component of the collisional process, and the algorithmic difficulty associated with divergence at small $v$ remains. Thus, the pitch-angle scattering component can still be integrated by the ESEC algorithm, which can automatically avoid the numerical divergence at small $v$ due to the perpendicular dynamics. Specifically, the operator-splitting method \citep{higham2002strong} can be applied to separately calculate the pitch-angle scattering and other dynamics at each time step.

Note that when the parallel dynamics is not negligible, the physics of energy conservation is no longer solely represented by the invariance of the velocity norm anymore. As a consequence, the rigorous proof of convergence presented in this paper can not be directly applied. This is a topic for future investigation.

Other future works include the generalization of the ESEC algorithm to more complex and realistic situations. The current paper only considers the particle distribution in the velocity-space. When the background fields are spatially non-uniform, the spatial distribution should also be included. The drift and diffusion coefficients in the SDEs may explicitly depend on the distribution function. For example, the nonlinear Fokker-Plank equation can be solved using methods based on distribution dependent SDEs \citep{barbu2020nonlinear,wang2018distribution,frank2005nonlinear}. Furthermore, the algorithm developed can also be used as a part of the Particle-In-Cell method \citep{allen1994computational}. We plan to use the algorithm to study physics problems involving interaction between particles and self-consistent electromagnetic fields, e.g., the runaway electrons dynamics coupled with plasma instabilities during a tokamak disruption. Another direction for research is to extend the algorithm to high orders. In this regard, the methods used to develop high-order structure-preserving algorithms for both deterministic differential equations \citep{he2015volume,he2016higher,He2016HigherRela,he2017explicit} and SDEs \citep{milstein2002numerical,deng2014high,Wang2016,Hong2017,Zhou2017,Holm2018} could potentially be adapted. 
\begin{acknowledgments}
The authors are grateful to the anonymous referee for the suggestion of the comparison between our algorithm and the EM algorithm in spherical coordinates. X. Zhang would like to thank Francesca Poli and Nicolas Lopez for helpful suggestions. H. Qin thanks Tom Tyranowski and Yajuan Sun for fruitful discussions. This work is supported by US DOE (DE-AC02-09CH11466). 
\end{acknowledgments}

\appendix
%dummy comment inserted by tex2lyx to ensure that this paragraph is not empty%dummy comment inserted by tex2lyx to ensure that this paragraph is not empty%dummy comment inserted by tex2lyx to ensure that this paragraph is not empty%dummy comment inserted by tex2lyx to ensure that this paragraph is not empty%dummy comment inserted by tex2lyx to ensure that this paragraph is not empty

\section{\label{sec:verify_lipschitz} The global Lipschitz condition of coefficients }

A function $f(\boldsymbol{x})$ defined on $\boldsymbol{x}\in\mathbb{R}^{d}$ satisfies the global Lipschitz condition, or is globally Lipschitz continuous, means that for all $\boldsymbol{x}\in\mathbb{R}^{d}$ and $\boldsymbol{y}\in\mathbb{R}^{d}$, there exist a constant $K$ independent of $\boldsymbol{x},\boldsymbol{y}$ such that \begin{linenomath} 
\begin{align}
|f(\boldsymbol{x})-f(\boldsymbol{y})|\leq K|\boldsymbol{x}-\boldsymbol{y}|.\label{eq:lipschitz}
\end{align}
\end{linenomath}

For the coefficients in Eq.~(\ref{eq:Ito_SDE2}), if $\boldsymbol{B}(t)$ is bounded for all $t$, terms $\boldsymbol{v}\times\boldsymbol{B}$, $\boldsymbol{v}$, $\boldsymbol{I}v$ satisfy the global Lipschitz condition uniformly for all $t$. The term $\boldsymbol{vv}/v$ consists of two types of functions, \begin{linenomath} 
\begin{align}
f_{1}(\boldsymbol{x})=\dfrac{x_{1}^{2}}{\sqrt{x_{1}^{2}+x_{2}^{2}+x_{3}^{2}}},\quad f_{2}(\boldsymbol{x})=\dfrac{x_{1}x_{2}}{\sqrt{x_{1}^{2}+x_{2}^{2}+x_{3}^{2}}}.
\end{align}
\end{linenomath} We can verify that both types are globally Lipschitz continuous as follows.

Since $f_{1}$ or $f_{2}$ is differentiable in $\mathbb{R}^{3}/\{0\}$, when $\boldsymbol{x}$ is not proportional to $\boldsymbol{y}$, Eq.~(\ref{eq:lipschitz}) can be proved by the mean value theorem, which states that \begin{linenomath} 
\begin{align}
f(\boldsymbol{x})-f(\boldsymbol{y})=(\boldsymbol{x}-\boldsymbol{y})\cdot\nabla f(\boldsymbol{z}),
\end{align}
\end{linenomath} where $\boldsymbol{z}$ is a point in the segment between $\boldsymbol{x}$ and $\boldsymbol{y}$. By the Cauchy-Schwarz inequality, we have \begin{linenomath} 
\begin{align}
|f(\boldsymbol{x})-f(\boldsymbol{y})|\leq|\nabla f(\boldsymbol{z})|\cdot|\boldsymbol{x}-\boldsymbol{y}|.
\end{align}
\end{linenomath} The gradients of $f_{1}$ and $f_{2}$ are bounded, \begin{linenomath} 
\begin{align}
|\nabla f_{1}|^{2}=2-\dfrac{x_{1}^{4}+2(x_{2}^{2}+x_{3}^{2})^{2}}{(x_{1}^{2}+x_{2}^{2}+x_{3}^{2})^{2}}\leq2,\quad|\nabla f_{2}|^{2}=1-\dfrac{3x_{1}^{2}x_{2}^{2}+(x_{1}^{2}+x_{2}^{2})x_{3}^{2}+x_{3}^{4}}{(x_{1}^{2}+x_{2}^{2}+x_{3}^{2})^{2}}\leq1.
\end{align}
\end{linenomath} Therefore, they satisfy the global Lipschitz condition.

When $\boldsymbol{x}$ is proportional to $\boldsymbol{y}$, the segment between them passes through the undifferentiable point $0$, and the mean value theorem can not be applied directly. In this case, we could write $\boldsymbol{y}=\kappa\boldsymbol{x}$ with some constant $\kappa$, and estimate the difference as \begin{linenomath} 
\begin{align}
|f_{i}(\boldsymbol{x})-f_{i}(\kappa\boldsymbol{x})| & =|(1-|\kappa|)f_{i}(\boldsymbol{x})|=|(1-|\kappa|)x_{i}|\cdot\left|\dfrac{x_{1}}{\sqrt{x_{1}^{2}+x_{2}^{2}+x_{3}^{2}}}\right|\nonumber \\
 & \leq|(1-|\kappa|)\boldsymbol{x}|=\left|\dfrac{1-|\kappa|}{1-\kappa}\right||\boldsymbol{x}-\kappa\boldsymbol{x}|\leq|\boldsymbol{x}-\boldsymbol{y}|,
\end{align}
\end{linenomath} where $i=1,2$.

Therefore, functions $f_{1}$ and $f_{2}$, as well as all coefficients in Eq.~(\ref{eq:Ito_SDE2}), are globally Lipschitz continuous.

\section{\label{sec:appenix} Numerical evaluations of strong and weak errors}

Assuming relative to the analytical solution $\boldsymbol{v}(t_{k};t_{0},\boldsymbol{v}_{0})$, the approximation $\bar{\boldsymbol{v}}(t_{k};t,\boldsymbol{v}_{0},h)$ has order $p_{s}$ strong error and order $p_{w}$ weak error in the entire time interval, then for two different time steps $h_{l}$ and $h_{l+1}$ with $h_{l+1}<h_{l}$, we have \begin{linenomath} 
\begin{align}
\bar{\epsilon}_{\text{s}}^{2}= & \mathbb{E}|\bar{\boldsymbol{v}}(t;t_{0},\boldsymbol{v}_{0},h_{l+1})-\bar{\boldsymbol{v}}(t;t_{0},\boldsymbol{v}_{0},h_{l})|^{2}\nonumber \\
\leq & \mathbb{E}|\boldsymbol{v}(t;t_{0},\boldsymbol{v}_{0})-\bar{\boldsymbol{v}}(t;t_{0},\boldsymbol{v}_{0},h_{l+1})|^{2}+\mathbb{E}|\boldsymbol{v}(t;t_{0},\boldsymbol{v}_{0})-\bar{\boldsymbol{v}}(t;t_{0},\boldsymbol{v}_{0},h_{l})|^{2}\nonumber \\
\leq & (1+\mathbb{E}|\boldsymbol{v}_{0}|^{2})(K_{l}^{2}h_{l}^{2p_{s}}+K_{l+1}^{2}h_{l+1}^{2p_{s}})\leq K(1+\mathbb{E}|\boldsymbol{v}_{0}|^{2})h_{l}^{2p_{s}}.
\end{align}
\end{linenomath} Thus, $\bar{\epsilon}_{\text{s}}$ is order $\mathcal{O}(h_{l}^{p_{s}})$. Similarly for the weak error, we have \begin{linenomath} 
\begin{align}
\bar{\epsilon}_{\text{w}} & =\left|\mathbb{E}\left[\bar{\boldsymbol{v}}(t;t_{0},\boldsymbol{v}_{0},h_{l+1})-\bar{\boldsymbol{v}}(t;t_{0},\boldsymbol{v}_{0},h_{l})\right]\right|\nonumber \\
 & \leq\left|\mathbb{E}\left[{\boldsymbol{v}}(t;t_{0},\boldsymbol{v}_{0})-\bar{\boldsymbol{v}}(t;t_{0},\boldsymbol{v}_{0},h_{l+1})\right]\right|+\left|\mathbb{E}\left[{\boldsymbol{v}}(t;t_{0},\boldsymbol{v}_{0})-\bar{\boldsymbol{v}}(t;t_{0},\boldsymbol{v}_{0},h_{l})\right]\right|\nonumber \\
 & \leq(1+\mathbb{E}|\boldsymbol{v}_{0}|^{2})^{1/2}(K_{l}h_{l}^{p_{w}}+K_{l+1}h_{l+1}^{p_{s}})\leq K(1+\mathbb{E}|\boldsymbol{v}_{0}|^{2})^{1/2}h_{l}^{p_{w}}.
\end{align}
\end{linenomath} Thus, $\bar{\epsilon}_{\text{w}}$ is order $\mathcal{O}(h_{l}^{p_{w}})$.

\section{\label{sec:regularization} Euler-Maruyama scheme in different coordinates}

In this appendix we briefly describe the Euler-Maruyama (EM) scheme for pitch-angle scattering in both Cartesian and spherical coordinate systems. Different regularization techniques used in each coordinate system will be introduced. The method for comparing sample paths with the same underlying Wiener process but calculated in different coordinate systems will also be discussed.

\subsection{\label{sec:rEM_C} EM scheme in Cartesian coordinates }

In Cartesian coordinates, the Ito SDE in Eq.~(\ref{eq:Ito_SDE}) can be re-written as: \begin{linenomath} 
\begin{align}
\mathrm{d}\boldsymbol{v}=\Big(\boldsymbol{v}\times\boldsymbol{B}-\mathcal{F}(v)\hat{\boldsymbol{v}}\Big)\mathrm{d}t+\mathcal{D}(v)(\boldsymbol{I}-\hat{\boldsymbol{v}}\hat{\boldsymbol{v}})\cdot\mathrm{d}\boldsymbol{W},
\end{align}
\end{linenomath} where $\hat{\boldsymbol{v}}=\boldsymbol{v}/v$ is the unit vector parallel to $\boldsymbol{v}$, $\mathcal{F}(v)=D(v)/v$, $\mathcal{D}(v)=\sqrt{D(v)}$, and $D(v)=1/v$. When $v\to0$, the scalar functions $\mathcal{F}$ and $\mathcal{D}$ are divergent, while the rest are finite. Similar to Ref.\,\citep{rosin2014multilevel}, for a chosen critical velocity $v_{c}$, the regularized function $\mathcal{\mathcal{F}}_{\mathrm{r}}(v)$ is defined as \begin{linenomath} 
\begin{align}
\mathcal{F}_{\mathrm{r}}(v):=\begin{cases}
\mathcal{F}(v), & v>v_{c},\\[5pt]
\dfrac{\mathcal{F}^{\prime}\left(v_{c}\right)}{2v_{c}}\left(v^{2}-v_{c}^{2}\right)+\mathcal{F}\left(v_{c}\right), & v\leq v_{c},
\end{cases}
\end{align}
\end{linenomath} where $\mathcal{F}'=\mathrm{d}\mathcal{F}/\mathrm{d}v$. The regularized function $\mathcal{\mathcal{D}}_{\mathrm{r}}(v)$ is similarly defined. It is clear that $\mathcal{D}_{\mathrm{r}}(v)$ and $\mathcal{\mathcal{F}}_{\mathrm{r}}(v)$ are $C^{1}$-continuous and regular at $v=0$. The regularized EM scheme in Cartesian coordiantes (rEM\_C) is \begin{linenomath} 
\begin{align}
\boldsymbol{v}_{k+1}^{\text{rEM\_C}}-\boldsymbol{v}_{k}=\Big(\boldsymbol{v}_{k}\times\boldsymbol{B}_{k}-\mathcal{F}_{\mathrm{r}}(v_{k})\hat{\boldsymbol{v}}_{k}\Big)h+\mathcal{D}_{\mathrm{r}}(v_{k})(\boldsymbol{I}-\hat{\boldsymbol{v}}_{k}\hat{\boldsymbol{v}}_{k})\cdot\Delta\boldsymbol{W}.
\end{align}
\end{linenomath} For the simulation conducted in Sec.\,\ref{sec:constant_B_benchmark}, we set $v_{c}=0.2$.

\subsection{\label{sec:rEM_s} EM scheme in spherical coordinates }

Assuming the magnetic field $B$ is constant and in the $\hat{z}$-direction, the Ito SDE in spherical coordinates can be written as: \begin{linenomath} 
\begin{align}
\mathrm{d}v & =0,\label{eq:Ito_SDE_v}\\
\mathrm{d}\mu & =-D_{a}(v)\mu\,\mathrm{d}t+\sqrt{D_{a}(v)(1-\mu^{2})}\,\mathrm{d}W_{\mu},\label{eq:Ito_SDE_mu}\\
\mathrm{d}\varphi & =-B\,\mathrm{d}t+\sqrt{\dfrac{D_{a}(v)}{1-\mu^{2}}}\,\mathrm{d}W_{\varphi},
\end{align}
\end{linenomath} where $v$ is the velocity magnitude, $\mu=\cos\theta$ is the cosine of the polar angle, $\varphi$ is the azimuthal angle, and $D_{a}(v)=1/v^{3}$. From Eq.~(\ref{eq:Ito_SDE_v}), we know that $v$ and $D_{a}(v)$ are exactly constant. When $|\mu|=1$, the coefficient $\sqrt{1-\mu^{2}}$ is not Lipschitz continuous. In addition, when $|\mu|$ is too close to $1$, i.e., when the particle is too close to north and south pole, the numerical calculation may lead to $|\mu|>1$ and imaginary $\sqrt{1-\mu^{2}}$. Therefore, some regularization methods have to be applied at $|\mu|\to1$.

Here we choose the regularization method used in Ref.~\citep{rosin2014multilevel}, which modified the coefficient $\sqrt{1-\mu^{2}}$ to: \begin{linenomath} 
\begin{align}
\mathcal{M}(\mu)=\begin{cases}
\sqrt{1-\mu^{2}}, & |\mu|<\mu_{c},\\
\sqrt{1-\mu_{c}^{2}}\exp\left[-\left(|\mu|-\mu_{c}\right)S\left(\mu_{c}\right)\right], & |\mu|\geq\mu_{c},
\end{cases}\label{eq:angular_regularization}
\end{align}
\end{linenomath} where $S(\mu_{c})=\mu_{c}/(1-\mu_{c}^{2})$, and $\mu_{c}$ is a chosen critical value. We see that $\mathcal{M}$ is positive and $C^{1}$-continuous for all $\mu\in\mathbb{R}$. To ensure the Einstein relation, the drift term in Eq.~(\ref{eq:Ito_SDE_mu}) is also modified to $D_{a}\mathcal{M}\mathcal{M}'$, where $\mathcal{M}'=\mathrm{d}\mathcal{M}/\mathrm{d}\mu$. The regularized EM scheme in spherical coordinates (rEM\_s) is: \begin{linenomath} 
\begin{align}
\mu_{k+1}^{\text{rEM\_s}}-\mu_{k} & =D_{a}\mathcal{M}(\mu_{k})\mathcal{M}'(\mu_{k})\Delta t+\sqrt{D_{a}}\mathcal{M}(\mu_{k})\Delta W_{\mu},\label{eq:rEM_mu}\\
\varphi_{k+1}^{\text{rEM\_s}}-\varphi_{k} & =B\Delta t+\dfrac{\sqrt{D_{a}}}{\mathcal{M}(\mu_{k})}\Delta W_{\varphi}.
\end{align}
\end{linenomath}

For the simulation in Sec.\,\ref{sec:constant_B_benchmark}, we set $\mu_{c}=0.9$. One thing to notice is that although this regularization method prevents imaginary value, it does not prevent $|\mu|$ from exceeding 1, which is shown in Sec.\,\ref{sec:divergent_sample_path}.

\subsection{\label{sec:generate_weiner} Comparing sample paths in different coordinates }

Usually in the numerical calculations, $\Delta\boldsymbol{W}=(\Delta W_{x},\Delta W_{y},\Delta W_{z})$ or $(\Delta W_{v},\Delta W_{\mu},\Delta W_{\varphi})$ are generated as Gaussian random variables directly. However, if we want to compare sample paths calculated in different coordinates but with the same underlying Wiener process, for example in Sec.\,\ref{sec:divergent_sample_path}, we need to generate $\Delta\boldsymbol{W}$ in one coordinate system, then transform it to the other coordinates.

Assume we have a three dimensional Wiener process $\boldsymbol{W}(t)$ in Cartesian coordinates, but we want to calculate the sample path in spherical coordinates. At each time step $t_{k}$, $\Delta\boldsymbol{W}=\boldsymbol{W}(t_{k+1})-\boldsymbol{W}(t_{k})$ and the sample is at $\boldsymbol{v}_{k}=(v_{k},\mu_{k},\varphi_{k})$. Then $\Delta\boldsymbol{W}$ in spherical coordinates is given by: \begin{linenomath} 
\begin{align}
\begin{pmatrix}\Delta W_{r}\\
\Delta W_{\mu}\\
\Delta W_{\varphi}
\end{pmatrix}=\begin{pmatrix}\sqrt{1-\mu_{k}^{2}}\cos\varphi_{k} & \sqrt{1-\mu_{k}^{2}}\sin\varphi_{k} & \mu_{k}\\
-\mu_{k}\cos\varphi_{k} & -\mu_{k}\sin\varphi_{k} & \sqrt{1-\mu_{k}^{2}}\\
-\sin\varphi_{k} & \cos\varphi_{k} & 0
\end{pmatrix}\begin{pmatrix}\Delta W_{x}\\
\Delta W_{y}\\
\Delta W_{z}
\end{pmatrix}.
\end{align}
\end{linenomath} Here we see the factor $\sqrt{1-\mu^{2}}$ again, which can be regularized using the same method described in the previous section. The inverse transform from spherical to Cartesian coordinates can be performed similarly.

\bibliographystyle{apsrev4-2}
\bibliography{main_article.bib}

%apsrev4-2.bst 2019-01-14 (MD) hand-edited version of apsrev4-1.bst
%Control: key (0)
%Control: author (72) initials jnrlst
%Control: editor formatted (1) identically to author
%Control: production of article title (-1) disabled
%Control: page (0) single
%Control: year (1) truncated
%Control: production of eprint (0) enabled
\begin{thebibliography}{47}%
\makeatletter
\providecommand \@ifxundefined [1]{%
 \@ifx{#1\undefined}
}%
\providecommand \@ifnum [1]{%
 \ifnum #1\expandafter \@firstoftwo
 \else \expandafter \@secondoftwo
 \fi
}%
\providecommand \@ifx [1]{%
 \ifx #1\expandafter \@firstoftwo
 \else \expandafter \@secondoftwo
 \fi
}%
\providecommand \natexlab [1]{#1}%
\providecommand \enquote  [1]{``#1''}%
\providecommand \bibnamefont  [1]{#1}%
\providecommand \bibfnamefont [1]{#1}%
\providecommand \citenamefont [1]{#1}%
\providecommand \href@noop [0]{\@secondoftwo}%
\providecommand \href [0]{\begingroup \@sanitize@url \@href}%
\providecommand \@href[1]{\@@startlink{#1}\@@href}%
\providecommand \@@href[1]{\endgroup#1\@@endlink}%
\providecommand \@sanitize@url [0]{\catcode `\\12\catcode `\$12\catcode
  `\&12\catcode `\#12\catcode `\^12\catcode `\_12\catcode `\%12\relax}%
\providecommand \@@startlink[1]{}%
\providecommand \@@endlink[0]{}%
\providecommand \url  [0]{\begingroup\@sanitize@url \@url }%
\providecommand \@url [1]{\endgroup\@href {#1}{\urlprefix }}%
\providecommand \urlprefix  [0]{URL }%
\providecommand \Eprint [0]{\href }%
\providecommand \doibase [0]{https://doi.org/}%
\providecommand \selectlanguage [0]{\@gobble}%
\providecommand \bibinfo  [0]{\@secondoftwo}%
\providecommand \bibfield  [0]{\@secondoftwo}%
\providecommand \translation [1]{[#1]}%
\providecommand \BibitemOpen [0]{}%
\providecommand \bibitemStop [0]{}%
\providecommand \bibitemNoStop [0]{.\EOS\space}%
\providecommand \EOS [0]{\spacefactor3000\relax}%
\providecommand \BibitemShut  [1]{\csname bibitem#1\endcsname}%
\let\auto@bib@innerbib\@empty
%</preamble>
\bibitem [{\citenamefont {Landau}(1936)}]{landau1936kinetic}%
  \BibitemOpen
  \bibfield  {author} {\bibinfo {author} {\bibfnamefont {L.}~\bibnamefont
  {Landau}},\ }\href@noop {} {\bibfield  {journal} {\bibinfo  {journal} {Phys.
  Z. Sowjetunion}\ }\textbf {\bibinfo {volume} {10}},\ \bibinfo {pages} {154}
  (\bibinfo {year} {1936})}\BibitemShut {NoStop}%
\bibitem [{\citenamefont {Fisch}(1978)}]{Fisch1978}%
  \BibitemOpen
  \bibfield  {author} {\bibinfo {author} {\bibfnamefont {N.~J.}\ \bibnamefont
  {Fisch}},\ }\href {https://doi.org/10.1103/physrevlett.41.873} {\bibfield
  {journal} {\bibinfo  {journal} {Physical Review Letters}\ }\textbf {\bibinfo
  {volume} {41}},\ \bibinfo {pages} {873} (\bibinfo {year} {1978})}\BibitemShut
  {NoStop}%
\bibitem [{\citenamefont {Fisch}(1987)}]{Fisch87}%
  \BibitemOpen
  \bibfield  {author} {\bibinfo {author} {\bibfnamefont {N.~J.}\ \bibnamefont
  {Fisch}},\ }\href {https://doi.org/10.1103/RevModPhys.59.175} {\bibfield
  {journal} {\bibinfo  {journal} {Reviews of Modern Physics}\ }\textbf
  {\bibinfo {volume} {59}},\ \bibinfo {pages} {175} (\bibinfo {year}
  {1987})}\BibitemShut {NoStop}%
\bibitem [{\citenamefont {Karney}\ and\ \citenamefont
  {Fisch}(1986)}]{Karney86}%
  \BibitemOpen
  \bibfield  {author} {\bibinfo {author} {\bibfnamefont {C.~F.~F.}\
  \bibnamefont {Karney}}\ and\ \bibinfo {author} {\bibfnamefont {N.~J.}\
  \bibnamefont {Fisch}},\ }\href {https://doi.org/10.1063/1.865975} {\bibfield
  {journal} {\bibinfo  {journal} {Physics of Fluids}\ }\textbf {\bibinfo
  {volume} {29}},\ \bibinfo {pages} {180} (\bibinfo {year} {1986})}\BibitemShut
  {NoStop}%
\bibitem [{\citenamefont {Rosenbluth}\ and\ \citenamefont
  {Putvinski}(1997)}]{Rosenbluth97}%
  \BibitemOpen
  \bibfield  {author} {\bibinfo {author} {\bibfnamefont {M.~N.}\ \bibnamefont
  {Rosenbluth}}\ and\ \bibinfo {author} {\bibfnamefont {S.~V.}\ \bibnamefont
  {Putvinski}},\ }\href {https://doi.org/10.1088/0029-5515/37/10/I03}
  {\bibfield  {journal} {\bibinfo  {journal} {Nuclear Fusion}\ }\textbf
  {\bibinfo {volume} {37}},\ \bibinfo {pages} {1355} (\bibinfo {year}
  {1997})}\BibitemShut {NoStop}%
\bibitem [{\citenamefont {Boozer}(2015)}]{Boozer2015}%
  \BibitemOpen
  \bibfield  {author} {\bibinfo {author} {\bibfnamefont {A.~H.}\ \bibnamefont
  {Boozer}},\ }\href {https://doi.org/10.1063/1.4913582} {\bibfield  {journal}
  {\bibinfo  {journal} {Physics of Plasmas}\ }\textbf {\bibinfo {volume}
  {22}},\ \bibinfo {pages} {032504} (\bibinfo {year} {2015})}\BibitemShut
  {NoStop}%
\bibitem [{\citenamefont {d'Aquino}\ \emph {et~al.}(2006)\citenamefont
  {d'Aquino}, \citenamefont {Serpico}, \citenamefont {Coppola}, \citenamefont
  {Mayergoyz},\ and\ \citenamefont {Bertotti}}]{d2006midpoint}%
  \BibitemOpen
  \bibfield  {author} {\bibinfo {author} {\bibfnamefont {M.}~\bibnamefont
  {d'Aquino}}, \bibinfo {author} {\bibfnamefont {C.}~\bibnamefont {Serpico}},
  \bibinfo {author} {\bibfnamefont {G.}~\bibnamefont {Coppola}}, \bibinfo
  {author} {\bibfnamefont {I.}~\bibnamefont {Mayergoyz}},\ and\ \bibinfo
  {author} {\bibfnamefont {G.}~\bibnamefont {Bertotti}},\ }\href
  {https://doi.org/10.1063/1.2169472} {\bibfield  {journal} {\bibinfo
  {journal} {Journal of applied physics}\ }\textbf {\bibinfo {volume} {99}},\
  \bibinfo {pages} {08B905} (\bibinfo {year} {2006})}\BibitemShut {NoStop}%
\bibitem [{\citenamefont {Chac{\'o}n}\ \emph {et~al.}(2000)\citenamefont
  {Chac{\'o}n}, \citenamefont {Barnes}, \citenamefont {Knoll},\ and\
  \citenamefont {Miley}}]{chacon2000implicit}%
  \BibitemOpen
  \bibfield  {author} {\bibinfo {author} {\bibfnamefont {L.}~\bibnamefont
  {Chac{\'o}n}}, \bibinfo {author} {\bibfnamefont {D.}~\bibnamefont {Barnes}},
  \bibinfo {author} {\bibfnamefont {D.}~\bibnamefont {Knoll}},\ and\ \bibinfo
  {author} {\bibfnamefont {G.}~\bibnamefont {Miley}},\ }\href
  {https://doi.org/10.1006/jcph.1999.6395} {\bibfield  {journal} {\bibinfo
  {journal} {Journal of Computational Physics}\ }\textbf {\bibinfo {volume}
  {157}},\ \bibinfo {pages} {618} (\bibinfo {year} {2000})}\BibitemShut
  {NoStop}%
\bibitem [{\citenamefont {Yoon}\ and\ \citenamefont
  {Chang}(2014)}]{yoon2014fokker}%
  \BibitemOpen
  \bibfield  {author} {\bibinfo {author} {\bibfnamefont {E.}~\bibnamefont
  {Yoon}}\ and\ \bibinfo {author} {\bibfnamefont {C.}~\bibnamefont {Chang}},\
  }\href {https://doi.org/10.1063/1.4867359} {\bibfield  {journal} {\bibinfo
  {journal} {Physics of Plasmas}\ }\textbf {\bibinfo {volume} {21}},\ \bibinfo
  {pages} {032503} (\bibinfo {year} {2014})}\BibitemShut {NoStop}%
\bibitem [{\citenamefont {Kraus}\ and\ \citenamefont
  {Hirvijoki}(2017)}]{kraus2017metriplectic}%
  \BibitemOpen
  \bibfield  {author} {\bibinfo {author} {\bibfnamefont {M.}~\bibnamefont
  {Kraus}}\ and\ \bibinfo {author} {\bibfnamefont {E.}~\bibnamefont
  {Hirvijoki}},\ }\href {https://doi.org/10.1063/1.4998610} {\bibfield
  {journal} {\bibinfo  {journal} {Physics of Plasmas}\ }\textbf {\bibinfo
  {volume} {24}},\ \bibinfo {pages} {102311} (\bibinfo {year}
  {2017})}\BibitemShut {NoStop}%
\bibitem [{\citenamefont {Hirvijoki}\ \emph {et~al.}(2018)\citenamefont
  {Hirvijoki}, \citenamefont {Kraus},\ and\ \citenamefont
  {Burby}}]{Hirvijoki2018}%
  \BibitemOpen
  \bibfield  {author} {\bibinfo {author} {\bibfnamefont {E.}~\bibnamefont
  {Hirvijoki}}, \bibinfo {author} {\bibfnamefont {M.}~\bibnamefont {Kraus}},\
  and\ \bibinfo {author} {\bibfnamefont {J.~W.}\ \bibnamefont {Burby}},\
  }\href@noop {} {\bibinfo {title} {Metriplectic particle-in-cell integrators
  for the {L}andau collision operator}} (\bibinfo {year} {2018}),\ \Eprint
  {https://arxiv.org/abs/1802.05263} {arXiv:1802.05263 [physics.comp-ph]}
  \BibitemShut {NoStop}%
\bibitem [{\citenamefont {Takizuka}\ and\ \citenamefont
  {Abe}(1977)}]{takizuka1977binary}%
  \BibitemOpen
  \bibfield  {author} {\bibinfo {author} {\bibfnamefont {T.}~\bibnamefont
  {Takizuka}}\ and\ \bibinfo {author} {\bibfnamefont {H.}~\bibnamefont {Abe}},\
  }\href {https://doi.org/10.1016/0021-9991(77)90099-7} {\bibfield  {journal}
  {\bibinfo  {journal} {Journal of computational physics}\ }\textbf {\bibinfo
  {volume} {25}},\ \bibinfo {pages} {205} (\bibinfo {year} {1977})}\BibitemShut
  {NoStop}%
\bibitem [{\citenamefont {Nanbu}(1997)}]{nanbu1997theory}%
  \BibitemOpen
  \bibfield  {author} {\bibinfo {author} {\bibfnamefont {K.}~\bibnamefont
  {Nanbu}},\ }\href {https://doi.org/10.1103/PhysRevE.55.4642} {\bibfield
  {journal} {\bibinfo  {journal} {Physical Review E}\ }\textbf {\bibinfo
  {volume} {55}},\ \bibinfo {pages} {4642} (\bibinfo {year}
  {1997})}\BibitemShut {NoStop}%
\bibitem [{\citenamefont {Manheimer}\ \emph {et~al.}(1997)\citenamefont
  {Manheimer}, \citenamefont {Lampe},\ and\ \citenamefont
  {Joyce}}]{manheimer1997langevin}%
  \BibitemOpen
  \bibfield  {author} {\bibinfo {author} {\bibfnamefont {W.~M.}\ \bibnamefont
  {Manheimer}}, \bibinfo {author} {\bibfnamefont {M.}~\bibnamefont {Lampe}},\
  and\ \bibinfo {author} {\bibfnamefont {G.}~\bibnamefont {Joyce}},\ }\href
  {https://doi.org/10.1006/jcph.1997.5834} {\bibfield  {journal} {\bibinfo
  {journal} {Journal of Computational Physics}\ }\textbf {\bibinfo {volume}
  {138}},\ \bibinfo {pages} {563} (\bibinfo {year} {1997})}\BibitemShut
  {NoStop}%
\bibitem [{\citenamefont {Cadjan}\ and\ \citenamefont
  {Ivanov}(1999)}]{cadjan1999langevin}%
  \BibitemOpen
  \bibfield  {author} {\bibinfo {author} {\bibfnamefont {M.}~\bibnamefont
  {Cadjan}}\ and\ \bibinfo {author} {\bibfnamefont {M.}~\bibnamefont
  {Ivanov}},\ }\href {https://doi.org/10.1017/S0022377898007363} {\bibfield
  {journal} {\bibinfo  {journal} {Journal of plasma physics}\ }\textbf
  {\bibinfo {volume} {61}},\ \bibinfo {pages} {89} (\bibinfo {year}
  {1999})}\BibitemShut {NoStop}%
\bibitem [{\citenamefont {Sherlock}(2008)}]{sherlock2008monte}%
  \BibitemOpen
  \bibfield  {author} {\bibinfo {author} {\bibfnamefont {M.}~\bibnamefont
  {Sherlock}},\ }\href {https://doi.org/10.1016/j.jcp.2007.11.037} {\bibfield
  {journal} {\bibinfo  {journal} {Journal of Computational Physics}\ }\textbf
  {\bibinfo {volume} {227}},\ \bibinfo {pages} {2286} (\bibinfo {year}
  {2008})}\BibitemShut {NoStop}%
\bibitem [{\citenamefont {Lemons}\ \emph {et~al.}(2009)\citenamefont {Lemons},
  \citenamefont {Winske}, \citenamefont {Daughton},\ and\ \citenamefont
  {Albright}}]{lemons2009small}%
  \BibitemOpen
  \bibfield  {author} {\bibinfo {author} {\bibfnamefont {D.~S.}\ \bibnamefont
  {Lemons}}, \bibinfo {author} {\bibfnamefont {D.}~\bibnamefont {Winske}},
  \bibinfo {author} {\bibfnamefont {W.}~\bibnamefont {Daughton}},\ and\
  \bibinfo {author} {\bibfnamefont {B.}~\bibnamefont {Albright}},\ }\href
  {https://doi.org/10.1016/j.jcp.2008.10.025} {\bibfield  {journal} {\bibinfo
  {journal} {Journal of Computational Physics}\ }\textbf {\bibinfo {volume}
  {228}},\ \bibinfo {pages} {1391} (\bibinfo {year} {2009})}\BibitemShut
  {NoStop}%
\bibitem [{\citenamefont {Cohen}\ \emph {et~al.}(2010)\citenamefont {Cohen},
  \citenamefont {Dimits}, \citenamefont {Friedman},\ and\ \citenamefont
  {Caflisch}}]{cohen2010time}%
  \BibitemOpen
  \bibfield  {author} {\bibinfo {author} {\bibfnamefont {B.~I.}\ \bibnamefont
  {Cohen}}, \bibinfo {author} {\bibfnamefont {A.~M.}\ \bibnamefont {Dimits}},
  \bibinfo {author} {\bibfnamefont {A.}~\bibnamefont {Friedman}},\ and\
  \bibinfo {author} {\bibfnamefont {R.~E.}\ \bibnamefont {Caflisch}},\ }\href
  {https://doi.org/10.1109/TPS.2010.2049589} {\bibfield  {journal} {\bibinfo
  {journal} {IEEE transactions on plasma science}\ }\textbf {\bibinfo {volume}
  {38}},\ \bibinfo {pages} {2394} (\bibinfo {year} {2010})}\BibitemShut
  {NoStop}%
\bibitem [{\citenamefont {Dimits}\ \emph {et~al.}(2013)\citenamefont {Dimits},
  \citenamefont {Cohen}, \citenamefont {Caflisch}, \citenamefont {Rosin},\ and\
  \citenamefont {Ricketson}}]{dimits2013higher}%
  \BibitemOpen
  \bibfield  {author} {\bibinfo {author} {\bibfnamefont {A.~M.}\ \bibnamefont
  {Dimits}}, \bibinfo {author} {\bibfnamefont {B.~I.}\ \bibnamefont {Cohen}},
  \bibinfo {author} {\bibfnamefont {R.~E.}\ \bibnamefont {Caflisch}}, \bibinfo
  {author} {\bibfnamefont {M.}~\bibnamefont {Rosin}},\ and\ \bibinfo {author}
  {\bibfnamefont {L.}~\bibnamefont {Ricketson}},\ }\href
  {https://doi.org/10.1016/j.jcp.2013.01.038} {\bibfield  {journal} {\bibinfo
  {journal} {Journal of Computational Physics}\ }\textbf {\bibinfo {volume}
  {242}},\ \bibinfo {pages} {561} (\bibinfo {year} {2013})}\BibitemShut
  {NoStop}%
\bibitem [{\citenamefont {Zhang}\ \emph {et~al.}(2020)\citenamefont {Zhang},
  \citenamefont {Fu},\ and\ \citenamefont {Qin}}]{PhysRevE.102.033302}%
  \BibitemOpen
  \bibfield  {author} {\bibinfo {author} {\bibfnamefont {X.}~\bibnamefont
  {Zhang}}, \bibinfo {author} {\bibfnamefont {Y.}~\bibnamefont {Fu}},\ and\
  \bibinfo {author} {\bibfnamefont {H.}~\bibnamefont {Qin}},\ }\href
  {https://doi.org/10.1103/PhysRevE.102.033302} {\bibfield  {journal} {\bibinfo
   {journal} {Phys. Rev. E}\ }\textbf {\bibinfo {volume} {102}},\ \bibinfo
  {pages} {033302} (\bibinfo {year} {2020})}\BibitemShut {NoStop}%
\bibitem [{\citenamefont {Verboncoeur}(2005)}]{verboncoeur2005particle}%
  \BibitemOpen
  \bibfield  {author} {\bibinfo {author} {\bibfnamefont {J.~P.}\ \bibnamefont
  {Verboncoeur}},\ }\href {https://doi.org/10.1088/0741-3335/47/5A/017}
  {\bibfield  {journal} {\bibinfo  {journal} {Plasma Physics and Controlled
  Fusion}\ }\textbf {\bibinfo {volume} {47}},\ \bibinfo {pages} {A231}
  (\bibinfo {year} {2005})}\BibitemShut {NoStop}%
\bibitem [{\citenamefont {Kloeden}\ and\ \citenamefont
  {Platen}(2013)}]{kloeden2013numerical}%
  \BibitemOpen
  \bibfield  {author} {\bibinfo {author} {\bibfnamefont {P.~E.}\ \bibnamefont
  {Kloeden}}\ and\ \bibinfo {author} {\bibfnamefont {E.}~\bibnamefont
  {Platen}},\ }\href@noop {} {\emph {\bibinfo {title} {Numerical solution of
  stochastic differential equations}}},\ Vol.~\bibinfo {volume} {23}\ (\bibinfo
   {publisher} {Springer Science \& Business Media},\ \bibinfo {year}
  {2013})\BibitemShut {NoStop}%
\bibitem [{\citenamefont {Lemons}\ \emph {et~al.}(1995)\citenamefont {Lemons},
  \citenamefont {Lackman}, \citenamefont {Jones},\ and\ \citenamefont
  {Winske}}]{lemons1995noise}%
  \BibitemOpen
  \bibfield  {author} {\bibinfo {author} {\bibfnamefont {D.}~\bibnamefont
  {Lemons}}, \bibinfo {author} {\bibfnamefont {J.}~\bibnamefont {Lackman}},
  \bibinfo {author} {\bibfnamefont {M.}~\bibnamefont {Jones}},\ and\ \bibinfo
  {author} {\bibfnamefont {D.}~\bibnamefont {Winske}},\ }\href
  {https://doi.org/10.1103/PhysRevE.52.6855} {\bibfield  {journal} {\bibinfo
  {journal} {Physical Review E}\ }\textbf {\bibinfo {volume} {52}},\ \bibinfo
  {pages} {6855} (\bibinfo {year} {1995})}\BibitemShut {NoStop}%
\bibitem [{\citenamefont {Rosenbluth}\ \emph {et~al.}(1957)\citenamefont
  {Rosenbluth}, \citenamefont {MacDonald},\ and\ \citenamefont
  {Judd}}]{rosenbluth1957fokker}%
  \BibitemOpen
  \bibfield  {author} {\bibinfo {author} {\bibfnamefont {M.~N.}\ \bibnamefont
  {Rosenbluth}}, \bibinfo {author} {\bibfnamefont {W.~M.}\ \bibnamefont
  {MacDonald}},\ and\ \bibinfo {author} {\bibfnamefont {D.~L.}\ \bibnamefont
  {Judd}},\ }\href {https://doi.org/10.1103/PhysRev.107.1} {\bibfield
  {journal} {\bibinfo  {journal} {Physical Review}\ }\textbf {\bibinfo {volume}
  {107}},\ \bibinfo {pages} {1} (\bibinfo {year} {1957})}\BibitemShut {NoStop}%
\bibitem [{\citenamefont {Boozer}\ and\ \citenamefont
  {Kuo-Petravic}(1981)}]{boozer1981monte}%
  \BibitemOpen
  \bibfield  {author} {\bibinfo {author} {\bibfnamefont {A.~H.}\ \bibnamefont
  {Boozer}}\ and\ \bibinfo {author} {\bibfnamefont {G.}~\bibnamefont
  {Kuo-Petravic}},\ }\href {https://doi.org/10.1063/1.863445} {\bibfield
  {journal} {\bibinfo  {journal} {The Physics of Fluids}\ }\textbf {\bibinfo
  {volume} {24}},\ \bibinfo {pages} {851} (\bibinfo {year} {1981})}\BibitemShut
  {NoStop}%
\bibitem [{\citenamefont {Rosin}\ \emph {et~al.}(2014)\citenamefont {Rosin},
  \citenamefont {Ricketson}, \citenamefont {Dimits}, \citenamefont {Caflisch},\
  and\ \citenamefont {Cohen}}]{rosin2014multilevel}%
  \BibitemOpen
  \bibfield  {author} {\bibinfo {author} {\bibfnamefont {M.}~\bibnamefont
  {Rosin}}, \bibinfo {author} {\bibfnamefont {L.}~\bibnamefont {Ricketson}},
  \bibinfo {author} {\bibfnamefont {A.~M.}\ \bibnamefont {Dimits}}, \bibinfo
  {author} {\bibfnamefont {R.~E.}\ \bibnamefont {Caflisch}},\ and\ \bibinfo
  {author} {\bibfnamefont {B.~I.}\ \bibnamefont {Cohen}},\ }\href
  {https://doi.org/10.1016/j.jcp.2014.05.030} {\bibfield  {journal} {\bibinfo
  {journal} {Journal of Computational Physics}\ }\textbf {\bibinfo {volume}
  {274}},\ \bibinfo {pages} {140} (\bibinfo {year} {2014})}\BibitemShut
  {NoStop}%
\bibitem [{\citenamefont {Milstein}\ and\ \citenamefont
  {Tretyakov}(2013)}]{milstein2013stochastic}%
  \BibitemOpen
  \bibfield  {author} {\bibinfo {author} {\bibfnamefont {G.~N.}\ \bibnamefont
  {Milstein}}\ and\ \bibinfo {author} {\bibfnamefont {M.~V.}\ \bibnamefont
  {Tretyakov}},\ }\href@noop {} {\emph {\bibinfo {title} {Stochastic numerics
  for mathematical physics}}}\ (\bibinfo  {publisher} {Springer Science \&
  Business Media},\ \bibinfo {year} {2013})\BibitemShut {NoStop}%
\bibitem [{\citenamefont {Higham}\ \emph {et~al.}(2002)\citenamefont {Higham},
  \citenamefont {Mao},\ and\ \citenamefont {Stuart}}]{higham2002strong}%
  \BibitemOpen
  \bibfield  {author} {\bibinfo {author} {\bibfnamefont {D.~J.}\ \bibnamefont
  {Higham}}, \bibinfo {author} {\bibfnamefont {X.}~\bibnamefont {Mao}},\ and\
  \bibinfo {author} {\bibfnamefont {A.~M.}\ \bibnamefont {Stuart}},\ }\href
  {https://doi.org/10.1137/S0036142901389530} {\bibfield  {journal} {\bibinfo
  {journal} {SIAM Journal on Numerical Analysis}\ }\textbf {\bibinfo {volume}
  {40}},\ \bibinfo {pages} {1041} (\bibinfo {year} {2002})}\BibitemShut
  {NoStop}%
\bibitem [{\citenamefont {Hutzenthaler}\ \emph {et~al.}(2011)\citenamefont
  {Hutzenthaler}, \citenamefont {Jentzen},\ and\ \citenamefont
  {Kloeden}}]{hutzenthaler2011strong}%
  \BibitemOpen
  \bibfield  {author} {\bibinfo {author} {\bibfnamefont {M.}~\bibnamefont
  {Hutzenthaler}}, \bibinfo {author} {\bibfnamefont {A.}~\bibnamefont
  {Jentzen}},\ and\ \bibinfo {author} {\bibfnamefont {P.~E.}\ \bibnamefont
  {Kloeden}},\ }\href {https://doi.org/10.1098/rspa.2010.0348} {\bibfield
  {journal} {\bibinfo  {journal} {Proceedings of the Royal Society A:
  Mathematical, Physical and Engineering Sciences}\ }\textbf {\bibinfo {volume}
  {467}},\ \bibinfo {pages} {1563} (\bibinfo {year} {2011})}\BibitemShut
  {NoStop}%
\bibitem [{\citenamefont {Qin}\ \emph {et~al.}(2013)\citenamefont {Qin},
  \citenamefont {Zhang}, \citenamefont {Xiao}, \citenamefont {Liu},
  \citenamefont {Sun},\ and\ \citenamefont {Tang}}]{qin2013boris}%
  \BibitemOpen
  \bibfield  {author} {\bibinfo {author} {\bibfnamefont {H.}~\bibnamefont
  {Qin}}, \bibinfo {author} {\bibfnamefont {S.}~\bibnamefont {Zhang}}, \bibinfo
  {author} {\bibfnamefont {J.}~\bibnamefont {Xiao}}, \bibinfo {author}
  {\bibfnamefont {J.}~\bibnamefont {Liu}}, \bibinfo {author} {\bibfnamefont
  {Y.}~\bibnamefont {Sun}},\ and\ \bibinfo {author} {\bibfnamefont {W.~M.}\
  \bibnamefont {Tang}},\ }\href {https://doi.org/10.1063/1.4818428} {\bibfield
  {journal} {\bibinfo  {journal} {Physics of Plasmas}\ }\textbf {\bibinfo
  {volume} {20}},\ \bibinfo {pages} {084503} (\bibinfo {year}
  {2013})}\BibitemShut {NoStop}%
\bibitem [{\citenamefont {Shang}(2013)}]{Shang2013private}%
  \BibitemOpen
  \bibfield  {author} {\bibinfo {author} {\bibfnamefont {Z.}~\bibnamefont
  {Shang}},\ }\href@noop {} {} (\bibinfo {year} {2013}),\ \bibinfo {note}
  {private communication.}\BibitemShut {Stop}%
\bibitem [{\citenamefont {Boris}(1970)}]{boris1970relativistic}%
  \BibitemOpen
  \bibfield  {author} {\bibinfo {author} {\bibfnamefont {J.~P.}\ \bibnamefont
  {Boris}},\ }in\ \href@noop {} {\emph {\bibinfo {booktitle} {Proc. Fourth
  Conf. Num. Sim. Plasmas}}}\ (\bibinfo {year} {1970})\ pp.\ \bibinfo {pages}
  {3--67}\BibitemShut {NoStop}%
\bibitem [{\citenamefont {Milstein}\ \emph {et~al.}(2002)\citenamefont
  {Milstein}, \citenamefont {Repin},\ and\ \citenamefont
  {Tretyakov}}]{milstein2002numerical}%
  \BibitemOpen
  \bibfield  {author} {\bibinfo {author} {\bibfnamefont {G.~N.}\ \bibnamefont
  {Milstein}}, \bibinfo {author} {\bibfnamefont {Y.~M.}\ \bibnamefont
  {Repin}},\ and\ \bibinfo {author} {\bibfnamefont {M.~V.}\ \bibnamefont
  {Tretyakov}},\ }\href {https://doi.org/10.1137/S0036142901395588} {\bibfield
  {journal} {\bibinfo  {journal} {SIAM Journal on Numerical Analysis}\ }\textbf
  {\bibinfo {volume} {40}},\ \bibinfo {pages} {1583} (\bibinfo {year}
  {2002})}\BibitemShut {NoStop}%
\bibitem [{\citenamefont {Brillinger}(2012)}]{brillinger2012particle}%
  \BibitemOpen
  \bibfield  {author} {\bibinfo {author} {\bibfnamefont {D.~R.}\ \bibnamefont
  {Brillinger}},\ }\bibinfo {title} {A particle migrating randomly on a
  sphere},\ in\ \href {https://doi.org/10.1007/978-1-4614-1344-8_7} {\emph
  {\bibinfo {booktitle} {Selected Works of David Brillinger}}},\ \bibinfo
  {editor} {edited by\ \bibinfo {editor} {\bibfnamefont {P.}~\bibnamefont
  {Guttorp}}\ and\ \bibinfo {editor} {\bibfnamefont {D.}~\bibnamefont
  {Brillinger}}}\ (\bibinfo  {publisher} {Springer New York},\ \bibinfo
  {address} {New York, NY},\ \bibinfo {year} {2012})\ pp.\ \bibinfo {pages}
  {73--87}\BibitemShut {NoStop}%
\bibitem [{\citenamefont {Barbu}\ \emph {et~al.}(2020)\citenamefont {Barbu},
  \citenamefont {R{\"o}ckner} \emph {et~al.}}]{barbu2020nonlinear}%
  \BibitemOpen
  \bibfield  {author} {\bibinfo {author} {\bibfnamefont {V.}~\bibnamefont
  {Barbu}}, \bibinfo {author} {\bibfnamefont {M.}~\bibnamefont {R{\"o}ckner}},
  \emph {et~al.},\ }\href {https://doi.org/10.1214/19-AOP1410} {\bibfield
  {journal} {\bibinfo  {journal} {Annals of Probability}\ }\textbf {\bibinfo
  {volume} {48}},\ \bibinfo {pages} {1902} (\bibinfo {year}
  {2020})}\BibitemShut {NoStop}%
\bibitem [{\citenamefont {Wang}(2018)}]{wang2018distribution}%
  \BibitemOpen
  \bibfield  {author} {\bibinfo {author} {\bibfnamefont {F.-Y.}\ \bibnamefont
  {Wang}},\ }\href {https://doi.org/10.1016/j.spa.2017.05.006} {\bibfield
  {journal} {\bibinfo  {journal} {Stochastic Processes and their Applications}\
  }\textbf {\bibinfo {volume} {128}},\ \bibinfo {pages} {595} (\bibinfo {year}
  {2018})}\BibitemShut {NoStop}%
\bibitem [{\citenamefont {Frank}(2005)}]{frank2005nonlinear}%
  \BibitemOpen
  \bibfield  {author} {\bibinfo {author} {\bibfnamefont {T.~D.}\ \bibnamefont
  {Frank}},\ }\href@noop {} {\emph {\bibinfo {title} {Nonlinear Fokker-Planck
  equations: fundamentals and applications}}}\ (\bibinfo  {publisher} {Springer
  Science \& Business Media},\ \bibinfo {year} {2005})\BibitemShut {NoStop}%
\bibitem [{\citenamefont {Allen}\ and\ \citenamefont
  {Victory~Jr}(1994)}]{allen1994computational}%
  \BibitemOpen
  \bibfield  {author} {\bibinfo {author} {\bibfnamefont {E.}~\bibnamefont
  {Allen}}\ and\ \bibinfo {author} {\bibfnamefont {H.}~\bibnamefont
  {Victory~Jr}},\ }\href {https://doi.org/10.1016/0378-4371(94)90187-2}
  {\bibfield  {journal} {\bibinfo  {journal} {Physica A: Statistical Mechanics
  and its Applications}\ }\textbf {\bibinfo {volume} {209}},\ \bibinfo {pages}
  {318} (\bibinfo {year} {1994})}\BibitemShut {NoStop}%
\bibitem [{\citenamefont {He}\ \emph {et~al.}(2015)\citenamefont {He},
  \citenamefont {Sun}, \citenamefont {Liu},\ and\ \citenamefont
  {Qin}}]{he2015volume}%
  \BibitemOpen
  \bibfield  {author} {\bibinfo {author} {\bibfnamefont {Y.}~\bibnamefont
  {He}}, \bibinfo {author} {\bibfnamefont {Y.}~\bibnamefont {Sun}}, \bibinfo
  {author} {\bibfnamefont {J.}~\bibnamefont {Liu}},\ and\ \bibinfo {author}
  {\bibfnamefont {H.}~\bibnamefont {Qin}},\ }\href
  {https://doi.org/10.1016/j.jcp.2014.10.032} {\bibfield  {journal} {\bibinfo
  {journal} {Journal of Computational Physics}\ }\textbf {\bibinfo {volume}
  {281}},\ \bibinfo {pages} {135} (\bibinfo {year} {2015})}\BibitemShut
  {NoStop}%
\bibitem [{\citenamefont {He}\ \emph {et~al.}(2016{\natexlab{a}})\citenamefont
  {He}, \citenamefont {Sun}, \citenamefont {Liu},\ and\ \citenamefont
  {Qin}}]{he2016higher}%
  \BibitemOpen
  \bibfield  {author} {\bibinfo {author} {\bibfnamefont {Y.}~\bibnamefont
  {He}}, \bibinfo {author} {\bibfnamefont {Y.}~\bibnamefont {Sun}}, \bibinfo
  {author} {\bibfnamefont {J.}~\bibnamefont {Liu}},\ and\ \bibinfo {author}
  {\bibfnamefont {H.}~\bibnamefont {Qin}},\ }\href
  {https://doi.org/10.1016/j.jcp.2015.10.032} {\bibfield  {journal} {\bibinfo
  {journal} {Journal of Computational Physics}\ }\textbf {\bibinfo {volume}
  {305}},\ \bibinfo {pages} {172} (\bibinfo {year}
  {2016}{\natexlab{a}})}\BibitemShut {NoStop}%
\bibitem [{\citenamefont {He}\ \emph {et~al.}(2016{\natexlab{b}})\citenamefont
  {He}, \citenamefont {Sun}, \citenamefont {Zhang}, \citenamefont {Wang},
  \citenamefont {Liu},\ and\ \citenamefont {Qin}}]{He2016HigherRela}%
  \BibitemOpen
  \bibfield  {author} {\bibinfo {author} {\bibfnamefont {Y.}~\bibnamefont
  {He}}, \bibinfo {author} {\bibfnamefont {Y.}~\bibnamefont {Sun}}, \bibinfo
  {author} {\bibfnamefont {R.}~\bibnamefont {Zhang}}, \bibinfo {author}
  {\bibfnamefont {Y.}~\bibnamefont {Wang}}, \bibinfo {author} {\bibfnamefont
  {J.}~\bibnamefont {Liu}},\ and\ \bibinfo {author} {\bibfnamefont
  {H.}~\bibnamefont {Qin}},\ }\href {https://doi.org/10.1063/1.4962677}
  {\bibfield  {journal} {\bibinfo  {journal} {Physics of Plasmas}\ }\textbf
  {\bibinfo {volume} {23}},\ \bibinfo {pages} {092109} (\bibinfo {year}
  {2016}{\natexlab{b}})}\BibitemShut {NoStop}%
\bibitem [{\citenamefont {He}\ \emph {et~al.}(2017)\citenamefont {He},
  \citenamefont {Zhou}, \citenamefont {Sun}, \citenamefont {Liu},\ and\
  \citenamefont {Qin}}]{he2017explicit}%
  \BibitemOpen
  \bibfield  {author} {\bibinfo {author} {\bibfnamefont {Y.}~\bibnamefont
  {He}}, \bibinfo {author} {\bibfnamefont {Z.}~\bibnamefont {Zhou}}, \bibinfo
  {author} {\bibfnamefont {Y.}~\bibnamefont {Sun}}, \bibinfo {author}
  {\bibfnamefont {J.}~\bibnamefont {Liu}},\ and\ \bibinfo {author}
  {\bibfnamefont {H.}~\bibnamefont {Qin}},\ }\href
  {https://doi.org/10.1016/j.physleta.2016.12.031} {\bibfield  {journal}
  {\bibinfo  {journal} {Physics Letters A}\ }\textbf {\bibinfo {volume}
  {381}},\ \bibinfo {pages} {568} (\bibinfo {year} {2017})}\BibitemShut
  {NoStop}%
\bibitem [{\citenamefont {Deng}\ \emph {et~al.}(2014)\citenamefont {Deng},
  \citenamefont {Anton},\ and\ \citenamefont {Wong}}]{deng2014high}%
  \BibitemOpen
  \bibfield  {author} {\bibinfo {author} {\bibfnamefont {J.}~\bibnamefont
  {Deng}}, \bibinfo {author} {\bibfnamefont {C.}~\bibnamefont {Anton}},\ and\
  \bibinfo {author} {\bibfnamefont {Y.~S.}\ \bibnamefont {Wong}},\ }\href
  {https://doi.org/10.4208/cicp.311012.191113a} {\bibfield  {journal} {\bibinfo
   {journal} {Communications in Computational Physics}\ }\textbf {\bibinfo
  {volume} {16}},\ \bibinfo {pages} {169} (\bibinfo {year} {2014})}\BibitemShut
  {NoStop}%
\bibitem [{\citenamefont {Wang}\ \emph {et~al.}(2016)\citenamefont {Wang},
  \citenamefont {Hong},\ and\ \citenamefont {Xu}}]{Wang2016}%
  \BibitemOpen
  \bibfield  {author} {\bibinfo {author} {\bibfnamefont {P.}~\bibnamefont
  {Wang}}, \bibinfo {author} {\bibfnamefont {J.}~\bibnamefont {Hong}},\ and\
  \bibinfo {author} {\bibfnamefont {D.}~\bibnamefont {Xu}},\ }\href
  {https://doi.org/10.4208/cicp.261014.230616a} {\bibfield  {journal} {\bibinfo
   {journal} {Communications in Computational Physics}\ }\textbf {\bibinfo
  {volume} {21}},\ \bibinfo {pages} {237} (\bibinfo {year} {2016})}\BibitemShut
  {NoStop}%
\bibitem [{\citenamefont {Hong}\ \emph {et~al.}(2017)\citenamefont {Hong},
  \citenamefont {Sun},\ and\ \citenamefont {Wang}}]{Hong2017}%
  \BibitemOpen
  \bibfield  {author} {\bibinfo {author} {\bibfnamefont {J.}~\bibnamefont
  {Hong}}, \bibinfo {author} {\bibfnamefont {L.}~\bibnamefont {Sun}},\ and\
  \bibinfo {author} {\bibfnamefont {X.}~\bibnamefont {Wang}},\ }\href
  {https://doi.org/10.1137/17m111691x} {\bibfield  {journal} {\bibinfo
  {journal} {{SIAM} Journal on Numerical Analysis}\ }\textbf {\bibinfo {volume}
  {55}},\ \bibinfo {pages} {3006} (\bibinfo {year} {2017})}\BibitemShut
  {NoStop}%
\bibitem [{\citenamefont {Zhou}\ \emph {et~al.}(2017)\citenamefont {Zhou},
  \citenamefont {Zhang}, \citenamefont {Hong},\ and\ \citenamefont
  {Song}}]{Zhou2017}%
  \BibitemOpen
  \bibfield  {author} {\bibinfo {author} {\bibfnamefont {W.}~\bibnamefont
  {Zhou}}, \bibinfo {author} {\bibfnamefont {J.}~\bibnamefont {Zhang}},
  \bibinfo {author} {\bibfnamefont {J.}~\bibnamefont {Hong}},\ and\ \bibinfo
  {author} {\bibfnamefont {S.}~\bibnamefont {Song}},\ }\href
  {https://doi.org/10.1016/j.cam.2017.04.050} {\bibfield  {journal} {\bibinfo
  {journal} {Journal of Computational and Applied Mathematics}\ }\textbf
  {\bibinfo {volume} {325}},\ \bibinfo {pages} {134} (\bibinfo {year}
  {2017})}\BibitemShut {NoStop}%
\bibitem [{\citenamefont {Holm}\ and\ \citenamefont
  {Tyranowski}(2018)}]{Holm2018}%
  \BibitemOpen
  \bibfield  {author} {\bibinfo {author} {\bibfnamefont {D.~D.}\ \bibnamefont
  {Holm}}\ and\ \bibinfo {author} {\bibfnamefont {T.~M.}\ \bibnamefont
  {Tyranowski}},\ }\href {https://doi.org/10.1007/s10543-018-0720-2} {\bibfield
   {journal} {\bibinfo  {journal} {{BIT} Numerical Mathematics}\ }\textbf
  {\bibinfo {volume} {58}},\ \bibinfo {pages} {1009} (\bibinfo {year}
  {2018})}\BibitemShut {NoStop}%
\end{thebibliography}%

\end{document}